\documentclass[11pt]{article}
\usepackage{geometry}
\geometry{a4paper}
\usepackage{amsmath,amssymb,amsthm}
\usepackage{graphicx}
\usepackage{bm}
\usepackage[linesnumbered,ruled]{algorithm2e}
\SetKwRepeat{Do}{do}{while}
\usepackage{fullpage}
\usepackage{mathtools}
\usepackage{url}
\newtheorem{theorem}{Theorem}
\newtheorem{lemma}{Lemma}

\newtheorem{claim}{Claim}
\newtheorem{definition}{Definition}

\begin{document}

\title{A Linear Programming Approach to the Super-Stable Roommates Problem\thanks{%
This work was supported by JST ERATO Grant Number JPMJER2301, Japan.}}
\author{Naoyuki Kamiyama}
\date{Institute of Mathematics for Industry \\ Kyushu University, Fukuoka, Japan \\
\url{kamiyama@imi.kyushu-u.ac.jp}}

\maketitle

\begin{abstract}
The stable roommates problem is a non-bipartite version 
of the well-known 
stable matching problem. 
Teo and Sethuraman proved that, 
for each instance of the stable roommates problem
in a complete graph, 
there exists a linear inequality system
such that there exists a feasible solution to 
this system if and only if 
there exists a stable matching in the given instance.
The aim of this paper is to extend the result of 
Teo and Sethuraman 
to the stable roommates problem with ties. 
More concretely, we prove that, 
for each instance of the stable roommates problem with ties
in a complete graph,  
there exists a linear inequality system
such that there exists a feasible solution to 
this system if and only if 
there exists a super-stable matching in the given instance.
\end{abstract} 

\section{Introduction}

The stable roommates problem is a non-bipartite version 
of the stable matching problem in a 
bipartite 
graph~\cite{GaleS62}. 
In contrast to the stable matching problem, 
it is known that
there exists an instance of 
the stable roommates 
problem
where a stable matching does not exist~\cite[Example~3]{GaleS62}. 
Thus, the problem of checking the existence of a stable matching in
a given instance of the stable roommates problem is one of the most 
important problems in the study of the stable roommates problem.
For this problem, 
Irving~\cite{Irving85} proposed a polynomial-time algorithm.
This algorithm is combinatorial, i.e., this 
does not need solving linear programs. 
On the other hand, 
in the study of the stable matching problem, 
linear programming approaches have been actively 
studied (see, e.g., \cite{Fleiner03,Rothblum92,VandeVate89}). 
Thus, it is natural to investigate a linear programming 
approach to the stable roommates problem. 
In this direction, 
Teo and Sethuraman~\cite{TeoS98}
proposed the following linear programming 
approach.
They proved that, 
for each instance of the stable roommates problem
in a complete graph, 
there exists a linear inequality system
such that there exists a feasible solution to 
this system if and only if 
there exists a stable matching in the given instance.
Since a linear program such that 
the separation problem can be solved in 
polynomial time can be solved in polynomial time~\cite{GrotschelLS93}, 
the result of Teo and Sethuraman~\cite{TeoS98} 
can be regarded as another proof of the polynomial-time 
solvability of 
the problem of checking the existence of a stable matching in
a given instance of the stable roommates problem
in a complete graph. 
(See \cite{SethuramanT01,TeoS00} for related work of 
\cite{TeoS98} and \cite{AbeledoB96,AbeledoR94} for another linear 
programming approach to the stable roommates problem.) 

In this paper, 
we extend the linear programming approach of  
Teo and Sethuraman~\cite{TeoS98} 
to the stable roommates problem with ties. 
In particular, we focus on super-stability in the stable 
roommates problem with ties (see Section~\ref{section:preliminaries} for its formal definition). 
Irving and Manlove~\cite{IrvingM02}
proposed a combinatorial 
polynomial-time 
algorithm for the problem of checking the existence of a super-stable 
matching in a given instance of the stable roommates problem with ties
(see also \cite{FleinerIM07,FleinerIM11}).
In this paper, we prove that, 
for each instance of the stable roommates problem with ties
in a complete graph,  
there exists a linear inequality system
such that there exists a feasible solution to 
this system if and only if 
there exists a super-stable matching in the given instance.
Furthermore, we prove that the separation problem for the above linear inequality 
system 
can be solved in polynomial time, and 
a super-stable matching can be constructed from 
a feasible solution to the system 
in polynomial time.
Thus, the result of this paper 
gives another proof of the 
polynomial-time solvability of 
the problem of checking the existence of a super-stable 
matching in a given instance of the stable roommates problem with ties
in a complete graph.

Our proof basically follows the proof of 
\cite{TeoS98}.
The most remarkable difference between 
our proof and that in \cite{TeoS98}
is the following point. 
In the proof of 
\cite{TeoS98}, we first find a feasible solution to 
the linear inequality 
system.
After that, for each vertex, 
we arrange the incident edges having positive values in decreasing order 
according to its preference.
In the setting of \cite{TeoS98}, 
since the preferences do not contain ties, 
this ordering for each vertex is 
uniquely determined. 
However, in our setting, since 
the preferences contain ties, 
this idea cannot be straightforwardly 
applied. 
In this paper, we resolve this issue 
by proving  
the result corresponding to the self-duality result 
in \cite{AbeledoR94,TeoS98} for a linear programming formulation of 
the stable roommates problem without ties. 
The proof of our result is 
basically the same as 
the self-duality result in \cite{HuG21} for a linear programming 
formulation of the super-stable matching problem in bipartite graphs.
By using this self-duality result, 
we prove that, even if 
the preferences contain ties, 
for each vertex, 
the incident edges having positive values 
can be uniquely arranged  
in decreasing order 
according to its preference (see Lemma~\ref{lemma:strict_super}).

\section{Preliminaries} 
\label{section:preliminaries} 

Let $\mathbb{R}_+$, $\mathbb{Q}_+$ denote the sets of non-negative real numbers
and non-negative rational numbers, respectively. 
For each positive integer $z$, we define $[z] := \{1,2,\dots,z\}$. 
For each finite set $U$, each vector $x \in \mathbb{R}_+^U$, 
and each subset $W \subseteq U$, we define 
$x(W) := \sum_{u \in W}x(u)$. 

In this paper, we are given the complete graph $G = (V,E)$ on 
a finite vertex set $V$ such that $|V|$ is an even positive integer.
Notice that $E = \{e \subseteq V \mid |e| = 2\}$.
For each subset $F \subseteq E$ and 
each vertex $v \in V$, 
we define $F(v) \coloneqq \{e \in F \mid v \in e\}$. 
For each vertex $v \in V$, we are given a transitive binary 
relation $\succsim_v$ 
on $E(v)$ 
such that, 
for every pair of edges $e,f \in E(v)$, at least one of 
$e \succsim_v f$, $f \succsim_v e$ holds. 
For each vertex $v \in V$ and 
each pair of edges $e,f \in E(v)$,
if $e \succsim_v f$ and $f \not\succsim_v e$
(resp.\ $e \succsim_v f$ and $f \succsim_v e$), then
we write $e \succ_v f$
(resp.\ $e \sim_v f$). 
Intuitively speaking, 
if $e \succ_v f$, then 
$v$ prefers $e$ to $f$. 
If $e \sim_v f$, then $v$ is indifferent between $e$ and $f$.

\begin{definition}
A subset $\mu \subseteq E$ is called a \emph{matching in $G$} if
$|\mu(v)| = 1$ for every vertex $v \in V$.
\end{definition} 

For each matching $\mu$ in $G$ and each vertex $v \in V$, 
we do not distinguish between $\mu(v)$ and the edge in $\mu(v)$. 

\begin{definition}
Let $\mu$ be a matching in $G$. 
Then for each edge $e \in E \setminus \mu$, 
we say that 
$e$ \emph{weakly blocks $\mu$}
if 
$e \succsim_v \mu(v)$
for every vertex $v \in e$.
\end{definition} 

\begin{definition}
Let $\mu$ be a matching in $G$.
Then 
$\mu$ is said to be 
\emph{super-stable}
if no edge in $E \setminus \mu$ weakly blocks $\mu$. 
\end{definition}

Let $v$ be a vertex in $V$, and
let $e$ be an edge in $E(v)$. 
For 
each symbol $\odot \in \{\succ_v,\succsim_v,\sim_v\}$, 
we define 
$E[\mathop{\odot} e]$ (resp.\ $E[e \mathop{\odot}]$) as the set of 
edges $f \in E(v)$
such that $f \mathop{\odot} e$
(resp.\ $e \mathop{\odot} f$). 

For a positive integer $k$, 
a sequence $(v_0,v_1,\dots,v_k)$ 
of vertices in $V$ is called a \emph{walk in $G$} if 
$v_{i-1} \neq v_i$ holds 
for every integer $i \in [k]$. 
It should be noted that 
a walk in $G$ can pass through the same vertex more than once. 
For each 
walk $C = (v_0,v_1,\dots,v_k)$ in $G$,
if $v_0 = v_k$, then $C$ is called 
a \emph{closed walk in $G$}. 
For every closed walk $C = (v_0,v_1,\dots,v_k)$ in $G$,
we have $k \ge 2$. 

Let $C = (v_0,v_1,\dots,v_k)$ be a closed 
walk in $G$.
For each edge $e \in E$, 
the \emph{multiplicity ${\sf mult}_C(e)$ of $e$ with respect to $C$} is defined as 
the number of integers $i \in [k]$ such that 
$\{v_{i-1},v_i\} = e$. 
The \emph{multiplicity ${\sf mult}(C)$ of $C$} is defined as  
$\max_{e \in E}{\sf mult}_C(e)$. 
If $k \ge 3$ and 
$v_i \neq v_j$ holds 
for every pair of distinct integers $i,j \in [k]$, then 
$C$ is called a \emph{cycle in $G$}.
Notice that, for 
every cycle $C$ in $G$, 
we have ${\sf mult}(C) = 1$. 
If $C$ is a cycle in $G$, then 
we define 
\begin{equation*}
V(C) \coloneqq \{v_1,v_2,\dots,v_k\}, \ \ 
E(C) \coloneqq \{\{v_{i-1},v_i\} \mid i \in [k]\}.
\end{equation*} 

A set $\mathcal{C}$ of cycles in $G$
is said to be \emph{vertex-disjoint} 
if
we have 
$V(C) \cap V(C^{\prime}) = \emptyset$
for every pair of distinct cycles $C,C^{\prime} \in \mathcal{C}$. 
For each subset $F \subseteq E$ and 
each set $\mathcal{C}$ of vertex-disjoint cycles in $G$, 
$\mathcal{C}$ is called a \emph{decomposition of $F$} if 
$\bigcup_{C \in \mathcal{C}}E(C) = F$. 

For each closed walk 
$(v_0,v_1,\dots,v_k)$ in $G$, 
we define $v_{k+1} \coloneqq v_1$ and 
$v_{k+2} \coloneqq v_2$. 

\begin{definition}
A closed walk $(v_0,v_1,\dots,v_{k})$ in $G$
is said to be \emph{dangerous} if, 
for every integer $i \in [k]$, 
we have $\{v_{i+1},v_i\} \succsim_{v_i} \{v_i,v_{i-1}\}$.
\end{definition}

Define $\mathcal{D}$ as the set of dangerous closed 
walks $C$ 
in $G$ such that ${\sf mult}(C) \le 4|E|$.  
Notice that $|\mathcal{D}|$ is finite. 
Then we define ${\bf P}$ as 
the set of vectors $x \in \mathbb{R}_+^E$ satisfying the following 
conditions.

\begin{equation} \label{eq_1:constraint_super} 
x(E(v)) = 1 
\ \ \ \mbox{($\forall v \in V$)}.
\end{equation}
\begin{equation} \label{eq_2:constraint_super} 
x(e) + \sum_{v \in e}x(E[\succ_v e]) \ge 1 
\ \ \ \mbox{($\forall e \in E$)}.
\end{equation}
\begin{equation} \label{eq_3:constraint_super} 
\displaystyle{
\sum_{i \in [k]}
x(E[e_i \succsim_{v_i}] \setminus \{e_{i+1}\}) \le 
\lfloor k/2 \rfloor} 
\ \ \ \mbox{($\forall (v_0,v_1,\dots,v_k) \in \mathcal{D}$)},
\end{equation}
where  
we define $e_i \coloneqq \{v_{i-1},v_i\}$ 
for each integer $i \in [k+1]$.

The goal of this paper is to prove the following theorem.

\begin{theorem} \label{main_thm_super}
There exists a super-stable matching in $G$ if and only if 
${\bf P} \neq \emptyset$. 
\end{theorem}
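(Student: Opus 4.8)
The plan is to prove the two directions separately, following the strategy of Teo and Sethuraman but accounting for ties via the self-duality lemma announced in the introduction.

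For the forward direction, suppose $\mu$ is a super-stable matching in $G$. The natural candidate is the integral incidence vector $x := \chi^\mu \in \{0,1\}^E$, and I would verify it lies in ${\bf P}$. Condition~\eqref{eq_1:constraint_super} is immediate from the definition of a matching. For~\eqref{eq_2:constraint_super}: if $e \in \mu$ the left side is at least $x(e) = 1$; if $e = \{u,v\} \notin \mu$, then since $\mu$ is super-stable $e$ does not weakly block $\mu$, so $e \not\succsim_u \mu(u)$ or $e \not\succsim_v \mu(v)$, i.e.\ $\mu(u) \succ_u e$ or $\mu(v) \succ_v e$, which makes the corresponding term $x(E[\succ_u e])$ or $x(E[\succ_v e])$ equal to $1$. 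For~\eqref{eq_3:constraint_super}: along any closed walk $C = (v_0,\dots,v_k) \in \mathcal{D}$, I would argue that for each $i$ at most one of the two consecutive edges $e_i, e_{i+1}$ can lie in $\mu$ (a matching has degree one at $v_i$), and that $x(E[e_i \succsim_{v_i}] \setminus \{e_{i+1}\}) \le 1$ always, with this bound forced to $0$ whenever $e_{i+1} = \mu(v_i)$ — because then super-stability of $\mu$ together with the dangerous inequality $e_{i+1} \succsim_{v_i} e_i$ forces $\mu(v_i) \succ_{v_i} f$ for every $f \in \mu(v_i)$-incident edge $f \ne e_{i+1}$ with $f \succsim_{v_i} e_i$; a careful counting of indices $i$ for which the term can be nonzero then yields the bound $\lfloor k/2 \rfloor$. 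This is the routine but slightly fiddly "every dangerous walk has at most half its steps contributing" counting argument.

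For the reverse direction — the substantive one — suppose ${\bf P} \ne \emptyset$ and fix $x \in {\bf P}$. Here I would invoke the self-duality result (Lemma~\ref{lemma:strict_super}) to conclude that, for each vertex $v$, the edges in $E(v)$ with $x(e) > 0$ form a chain under $\succ_v$ on which $v$'s preference is \emph{strict} — i.e.\ no two positively-weighted incident edges are tied. This is exactly the structural input that is automatic in the no-ties setting of \cite{TeoS98} and must be recovered here. Granting it, I would then mimic the Teo–Sethuraman construction: for each vertex $v$ order the positively-weighted edges of $E(v)$ in decreasing $\succ_v$-order, use the "proposal/median" rounding of the fractional solution along this ordering to extract a matching $\mu$, and show $\mu$ is super-stable. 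The stability verification uses~\eqref{eq_2:constraint_super} to rule out weakly blocking edges and~\eqref{eq_3:constraint_super} to rule out the non-bipartite obstruction (the analogue of "odd dangerous cycles") that can otherwise prevent the rounding from closing up into a perfect matching.

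The main obstacle, and the place where this paper genuinely departs from \cite{TeoS98}, is establishing Lemma~\ref{lemma:strict_super}: that the support of $x$ restricted to each vertex is $\succ_v$-strict. Everything downstream — the very act of ordering incident positive edges and applying the rounding — depends on it, and its proof (following \cite{HuG21}) will require exploiting the interplay of \eqref{eq_1:constraint_super}, \eqref{eq_2:constraint_super}, and the dangerous-walk inequalities \eqref{eq_3:constraint_super} simultaneously, since it is precisely the conjunction of these that forces the latent self-dual structure. The secondary technical burden is checking that the closed walks arising in the rounding argument indeed fall within $\mathcal{D}$, i.e.\ have multiplicity at most $4|E|$, so that the relevant instances of \eqref{eq_3:constraint_super} are available; I expect a bound like $2|E|$ or $3|E|$ suffices and the slack is built into the definition of $\mathcal{D}$ on purpose.
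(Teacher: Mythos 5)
Your overall skeleton does match the paper's: characteristic vector for the forward direction, and for the converse a self-duality lemma giving strictness on supports, followed by a median/threshold rounding and a use of the dangerous-walk inequalities to kill odd obstructions. However, as written the proposal defers exactly the two steps where the real work lies, and in both cases it points in a direction that would not close the gap. First, the self-duality lemma: you say its proof ``will require exploiting the interplay of \eqref{eq_1:constraint_super}, \eqref{eq_2:constraint_super}, and the dangerous-walk inequalities \eqref{eq_3:constraint_super} simultaneously.'' That is not how it goes, and trying to involve \eqref{eq_3:constraint_super} would lead you astray. The paper's Lemma~\ref{lemma:self_dual_super} uses only \eqref{eq_1:constraint_super} and \eqref{eq_2:constraint_super}: one considers the LP of maximizing $x(E)$ over that subsystem, observes that for any feasible $x$ the pair $(\alpha_x,x)$ with $\alpha_x(v)=x(E(v))$ is dual feasible with equal objective value, so \emph{every} feasible point is optimal, and complementary slackness then yields both the tightness (S1) of \eqref{eq_2:constraint_super} on the support and the vanishing (S2) of $x$ on edges tied with a support edge. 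Without actually carrying out this argument you have no access to (S2), and (S2) is used repeatedly downstream (strictness of the support order, the identity in Lemma~\ref{lemma:cycle}, the analysis of blocking edges), not just once to enable an ordering.

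Second, the rounding is substantially more delicate than ``take the median edge and close up.'' The median edge $d(v)$ defined by the $1/2$-threshold is not symmetric: $d(v)=\{v,w\}$ does not force $d(w)=\{v,w\}$, so $\{d(v)\mid v\in V\}$ is not a matching. The paper must analyze the functional graph of $v\mapsto m(v)$, split $V$ into mutually-matched pairs $V^{\ast}$ and cycles $\mathcal{C}$, prove those cycles are dangerous and of even length (one application of \eqref{eq_3:constraint_super}), and then --- the crux --- prove that the graph on $L\cup B$, where $B$ is the set of \emph{potential} weak blockers sandwiched as $d(v)\succ_v e\succsim_v d^-(v)$ at both endpoints, has no odd cycle (Lemma~\ref{lemma:odd_dangerous}). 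Your proposal assigns the elimination of blocking edges to \eqref{eq_2:constraint_super} alone, but \eqref{eq_2:constraint_super} only narrows the blockers down to $B$; one still needs a bipartition of $L\cup B$ to choose which alternate edges of each cycle enter $\mu$ so that every edge of $B$ meets a vertex matched to its $d(\cdot)$, which it strictly prefers. Moreover, an odd cycle in $L\cup B$ is not itself a dangerous closed walk, so \eqref{eq_3:constraint_super} cannot be applied to it directly: the paper has to splice cycles of $\mathcal{C}$ into such a cycle (Algorithm~\ref{alg:odd_cycle}) to manufacture a dangerous closed walk of multiplicity at most $4|E|$ on which every term $x(E[g_i\succsim_{c_i}]\setminus\{g_{i+1}\})$ is at least $1/2$, and that verification again leans on (S2). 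These constructions are the content of the converse direction; without them the proposal is a plan rather than a proof.
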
 

If the preferences do not contain ties, 
then Theorem~\ref{main_thm_super} coincides with 
\cite[Theorem~4]{TeoS98}.
Since we can prove that 
the separation problem 
for ${\bf P}$ can be solved in polynomial time
in a similar way as \cite{TeoS98}
(see Section~\ref{section:separation}), 
Theorem~\ref{main_thm_super} 
gives another proof of
the polynomial-time solvability of 
the problem of checking the existence of a super-stable 
matching in a given instance of the stable roommates problem with ties
in a complete graph. 
Furthermore, the proof of 
Theorem~\ref{main_thm_super}
implies that 
if ${\bf P} \neq \emptyset$, then 
a super-stable matching can be constructed from 
an element in ${\bf P}$ in polynomial time. 

\section{Proof of Theorem~\ref{main_thm_super}}

In this section, we give the proof of Theorem~\ref{main_thm_super}.  

For each matching $\mu$ in $G$, 
we define the vector $x_{\mu} \in \{0,1\}^E$ by 
\begin{equation*}
x_{\mu}(e) \coloneqq 
\begin{cases}
1 & \mbox{if $e \in \mu$} \\
0 & \mbox{if $e \in E \setminus \mu$}. 
\end{cases}
\end{equation*}
That is, 
$x_{\mu}$ is the characteristic vector of $\mu$. 

\begin{lemma} \label{lemma:membership}
If there exists a super-stable matching $\mu$ in $G$, then $x_{\mu} \in {\bf P}$. 
\end{lemma}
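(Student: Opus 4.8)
The plan is to verify that $x_\mu$ satisfies each of the three families of constraints~\eqref{eq_1:constraint_super}, \eqref{eq_2:constraint_super}, and~\eqref{eq_3:constraint_super} defining $\mathbf{P}$, using only the fact that $\mu$ is a matching (for the first) and that $\mu$ is super-stable (for the remaining two). Since $x_\mu \in \{0,1\}^E \subseteq \mathbb{R}_+^E$, membership reduces entirely to checking these inequalities.

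First I would handle~\eqref{eq_1:constraint_super}: for each vertex $v \in V$, the definition of a matching gives $|\mu(v)| = 1$, so exactly one edge of $E(v)$ lies in $\mu$, whence $x_\mu(E(v)) = 1$. Next, for~\eqref{eq_2:constraint_super}, fix an edge $e \in E$. If $e \in \mu$, then $x_\mu(e) = 1$ and the left-hand side is at least $1$. If $e \notin \mu$, then by super-stability $e$ does not weakly block $\mu$, so there is some endpoint $v \in e$ with $e \not\succsim_v \mu(v)$; by the totality assumption on $\succsim_v$ this forces $\mu(v) \succ_v e$, i.e.\ $\mu(v) \in E[\succ_v e]$. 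Hence $x_\mu(E[\succ_v e]) \ge x_\mu(\mu(v)) = 1$, and again the left-hand side of~\eqref{eq_2:constraint_super} is at least $1$.

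The remaining and most delicate part is~\eqref{eq_3:constraint_super}. Fix a dangerous closed walk $(v_0, v_1, \dots, v_k) \in \mathcal{D}$ and set $e_i = \{v_{i-1}, v_i\}$. I want to bound $\sum_{i \in [k]} x_\mu\bigl(E[e_i \succsim_{v_i}] \setminus \{e_{i+1}\}\bigr)$. Each summand equals $1$ if $\mu(v_i) \in E[e_i \succsim_{v_i}] \setminus \{e_{i+1}\}$ and $0$ otherwise. The key claim is that the indices $i \in [k]$ contributing $1$ cannot be ``too dense'' — specifically, no two consecutive indices $i$ and $i+1$ (cyclically) can both contribute, which yields the bound $\lfloor k/2 \rfloor$. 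To prove this, suppose index $i$ contributes, so $\mu(v_i) \succsim_{v_i} e_i$ and $\mu(v_i) \neq e_{i+1}$. I would then argue that index $i+1$ cannot contribute: because the walk is dangerous we have $e_{i+1} \succsim_{v_i} e_i$, and we must rule out $\mu(v_i) = e_{i+1}$ being forced and simultaneously $\mu(v_{i+1}) \succsim_{v_{i+1}} e_{i+1}$. The mechanism is that if $\mu(v_i) \neq e_{i+1}$ while $\mu(v_i) \succsim_{v_i} e_i$, and if also $\mu(v_{i+1}) \succsim_{v_{i+1}} e_{i+1}$ with $\mu(v_{i+1}) \neq e_{i+1}$, then I would derive a weakly blocking edge, contradicting super-stability: the edge $e_{i+1}$ itself weakly blocks $\mu$ unless $e_{i+1} \in \mu$, and $e_{i+1} \in \mu$ forces $\mu(v_i) = e_{i+1} = \mu(v_{i+1})$, contradicting $\mu(v_i) \neq e_{i+1}$. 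Hence at most one of $i, i+1$ contributes for each $i$, and since this holds cyclically around the closed walk, the contributing indices form an ``independent set'' on the cycle $\mathbb{Z}/k\mathbb{Z}$, which has size at most $\lfloor k/2 \rfloor$.

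The main obstacle I anticipate is making the ``at most one of two consecutive indices contributes'' argument fully rigorous, in particular correctly accounting for the $\setminus \{e_{i+1}\}$ exclusion and the possibility that $e_{i+1} \in \mu$; one must track carefully which edge plays the role of the blocker and which endpoint witnesses the weak-blocking. A minor secondary point is that this family of constraints is stated for closed walks with bounded multiplicity ${\sf mult}(C) \le 4|E|$, but since we only need to verify the inequality for the walks in $\mathcal{D}$ (not derive anything about which walks suffice), the multiplicity bound plays no role in this direction and can be ignored. Once the independent-set observation is in place, the bound $\lfloor k/2 \rfloor$ is immediate, completing the verification that $x_\mu \in \mathbf{P}$.
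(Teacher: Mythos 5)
Your proposal takes essentially the same route as the paper: \eqref{eq_1:constraint_super} from the matching property, \eqref{eq_2:constraint_super} by super-stability at one endpoint, and for \eqref{eq_3:constraint_super} the observation that no two cyclically consecutive indices can both contribute (exhibiting $e_{i+1}$ as a weak blocker after ruling out $e_{i+1}\in\mu$), which the paper packages as the pairwise bound \eqref{eq_2:lemma:membership} and then sums. One correction to your sketch of \eqref{eq_3:constraint_super}: under the paper's convention, $\mu(v_i)\in E[e_i \succsim_{v_i}]$ unpacks to $e_i \succsim_{v_i} \mu(v_i)$, not $\mu(v_i)\succsim_{v_i} e_i$ as you wrote; with your direction the facts $e_{i+1}\succsim_{v_i} e_i$ and $\mu(v_i)\succsim_{v_i} e_i$ do not compose to show that $e_{i+1}$ weakly blocks on $v_i$, whereas with the correct direction transitivity gives $e_{i+1}\succsim_{v_i} e_i \succsim_{v_i}\mu(v_i)$ and the argument closes. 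With that direction fixed, your argument coincides with the paper's.
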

\begin{proof}
Assume that there exists a super-stable matching $\mu$ in $G$. 

\eqref{eq_1:constraint_super} 
Since $\mu$ is a matching in $G$, 
$x_{\mu}(E(v)) = |\mu(v)| = 1$ for every vertex $v \in V$.

\eqref{eq_2:constraint_super} 
Assume that there exists an edge $f \in E$ such
that \eqref{eq_2:constraint_super} is not satisfied. 
Then \eqref{eq_1:constraint_super} implies that 
\begin{equation*} 
1 
< 
\sum_{v \in f}x_{\mu}(E(v)) - x_{\mu}(f) - \sum_{v \in f}x_{\mu}(E[\succ_v f])
= 
x_{\mu}(f) + \sum_{v \in f}x_{\mu}(E[f \succsim_v] \setminus \{f\}).
\end{equation*}  
Thus, since $x_{\mu} \in \{0,1\}^E$, we have 
\begin{equation} \label{eq_1:lemma:membership}
x_{\mu}(f) + 
\sum_{v \in f}x_{\mu}(E[f \succsim_v] \setminus \{f\}) \ge 2. 
\end{equation}  

\begin{claim} \label{claim_1:lemma:membership} 
$x_{\mu}(f) = 0$, i.e., $f \notin \mu$. 
\end{claim}
\begin{proof}
Assume that $x_{\mu}(f) = 1$, i.e., 
$f \in \mu$. 
Then \eqref{eq_1:lemma:membership}
implies that 
there exist a vertex $v \in f$ and an edge $g \in E(v) \setminus \{f\}$
such that $g \in \mu$.
However, since $v \in f \cap g$ and $f \neq g$, this contradicts the fact that 
$\mu$ is a matching in $G$. 
\end{proof}  

Notice that 
Claim~\ref{claim_1:lemma:membership}
and \eqref{eq_1:lemma:membership}
imply that 
$f \succsim_v \mu(v)$
for every vertex $v \in f$. 
Thus, 
$f$ weakly blocks $\mu$.
This contradicts the fact that 
$\mu$ is super-stable. 

\eqref{eq_3:constraint_super} 
Let $(v_0,v_1,\dots,v_k)$ be a closed walk in $\mathcal{D}$.
Then for each integer $i \in [k+2]$, 
we define  
$e_{i} \coloneqq \{v_{i-1},v_i\}$. 
We first prove that, for every 
integer $i \in [k]$, 
\begin{equation} \label{eq_2:lemma:membership}
x_{\mu}(E[e_i \succsim_{v_i}] \setminus \{e_{i+1}\}) 
+ x_{\mu}(E[e_{i+1} \succsim_{v_{i+1}}] \setminus \{e_{i+2}\}) \le 1.
\end{equation} 
Assume that there exists an integer $j \in [k]$ such that 
\eqref{eq_2:lemma:membership} is not satisfied. 
Since $x_{\mu} \in \{0,1\}^E$, 
\begin{equation} \label{eq_3:lemma:membership}
x_{\mu}(E[e_j \succsim_{v_j}] \setminus \{e_{j+1}\}) 
+ x_{\mu}(E[e_{j+1} \succsim_{v_{j+1}}] \setminus \{e_{j+2}\}) \ge 2.
\end{equation} 

\begin{claim} \label{claim_3:lemma:membership} 
$x_{\mu}(e_{j+1}) = 0$, i.e., $e_{j+1} \notin \mu$. 
\end{claim}
\begin{proof}
Assume that $x_{\mu}(e_{j+1}) = 1$, i.e., 
$e_{j+1} \in \mu$. 
Since $x_{\mu}(E(v)) = 1$ for every vertex $v \in V$, 
\begin{equation*} 
x_{\mu}(E[e_j \succsim_{v_j}] \setminus \{e_{j+1}\}) \ge 1.
\end{equation*}
Thus, there exists an edge 
$f \in E(v_j) \setminus \{e_{j+1}\}$ such that 
$f \in \mu$. 
Since $v_j \in e_{j+1} \cap f$ and 
$f \neq e_{j+1}$, 
this contradicts the fact that $\mu$ is a matching in $G$.
\end{proof}  

Furthermore, \eqref{eq_3:lemma:membership}
implies that 
$e_j \succsim_{v_j} \mu(v_j)$ and 
$e_{j+1} \succsim_{v_{j+1}} \mu(v_{j+1})$.
Since the definition of a dangerous walk implies that 
$e_{j+1} \succsim_{v_j} e_j$, 
we have 
$e_{j+1} \succsim_{v_j} \mu(v_j)$.
Thus, 
Claim~\ref{claim_3:lemma:membership} implies that 
$e_{j+1}$ weakly blocks $\mu$. 
This is a contradiction. 
This completes the proof of \eqref{eq_2:lemma:membership}.

It follows from \eqref{eq_2:lemma:membership} that  
\begin{equation*}
\begin{split}
k & \ge 
\sum_{i \in [k]}
\big(x_{\mu}(E[e_i \succsim_{v_i}] \setminus \{e_{i+1}\}) 
+ x_{\mu}(E[e_{i+1} \succsim_{v_{i+1}}] \setminus \{e_{i+2}\})\big)\\
& =
2\sum_{i \in [k]}
x_{\mu}(E[e_i \succsim_{v_i}] \setminus \{e_{i+1}\}). 
\end{split} 
\end{equation*}
Since $x_{\mu} \in \{0,1\}^E$, this implies that 
\eqref{eq_3:constraint_super} is satisfied.
This completes the proof.  
\end{proof} 

Lemma~\ref{lemma:membership} implies that 
if there exists 
a super-stable matching in $G$, then 
${\bf P} \neq \emptyset$.
Thus, 
we prove the other direction. 
To this end, we first prove the following lemma. 
The proof of (S1) of Lemma~\ref{lemma:self_dual_super} is  
the same as the proof of 
\cite[Lemma~23]{HuG21arxiv} for bipartite graphs.
Furthermore, 
(S2) of Lemma~\ref{lemma:self_dual_super}
follows from (S1) and 
the 
complementary slackness theorem of linear programming
(see, e.g., \cite[Theorem 3.8]{ConfortiCZ14}).
For completeness, we give the proof
of Lemma~\ref{lemma:self_dual_super}.

\begin{lemma} \label{lemma:self_dual_super} 
For every vector $x \in \mathbb{R}_+^E$ satisfying 
\eqref{eq_1:constraint_super}, \eqref{eq_2:constraint_super}
and 
every edge $e \in E$ such that 
$x(e) > 0$, 
the following statements hold. 
\begin{description}
\item[(S1)]
$x(e) + \sum_{v \in e}x(E[\succ_v e]) = 1$. 
\item[(S2)] 
For every vertex $v \in e$ and 
every edge $f \in E[e \sim_v] \setminus \{e\}$,
we have $x(f) = 0$.  
\end{description}
\end{lemma}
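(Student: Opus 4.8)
The plan is to extract both (S1) and (S2) from a single quadratic identity, so that complementary slackness is not actually needed. Fix $x \in \mathbb{R}_+^E$ satisfying \eqref{eq_1:constraint_super} and \eqref{eq_2:constraint_super}. First I would rewrite \eqref{eq_2:constraint_super} in the slack form already used in \eqref{eq_1:lemma:membership}: substituting $x(E(v)) = 1$ turns \eqref{eq_2:constraint_super} for an edge $e$ into $x(e) + \sum_{v \in e} x(E[e \succsim_v] \setminus \{e\}) \le 1$. Accordingly set
$$s(e) \coloneqq 1 - x(e) - \sum_{v \in e} x(E[e \succsim_v] \setminus \{e\}) \qquad (e \in E),$$
so $s(e) \ge 0$ by \eqref{eq_2:constraint_super}; and, since $x(E[\succ_v e]) = 1 - x(E[e \succsim_v])$ by \eqref{eq_1:constraint_super}, one checks directly that $x(e) + \sum_{v \in e} x(E[\succ_v e]) = 1 + s(e)$, so (S1) for $e$ is equivalent to $s(e) = 0$.

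The heart of the proof is the identity
$$\sum_{e \in E} x(e)\, s(e) \;=\; -\tfrac{1}{2} \sum_{v \in V} \sum_{e \in E(v)} x(e)\, x\bigl(E[e \sim_v] \setminus \{e\}\bigr).$$
To establish it I would expand the left-hand side using $s(e) = 1 + x(e) - \sum_{v \in e} x(E[e \succsim_v])$, then regroup all sums by vertex (using that each edge has exactly two endpoints) and compute vertex by vertex. For a fixed $v$ the one nontrivial input is $\bigl(\sum_{e \in E(v)} x(e)\bigr)^2 = x(E(v))^2 = 1$ from \eqref{eq_1:constraint_super}: splitting the ordered pairs $(e,f) \in E(v)^2$ into the three exhaustive and mutually exclusive cases $e \succ_v f$, $f \succ_v e$, $e \sim_v f$ and using the obvious symmetry of the first two, one gets $\sum_{e \in E(v)} x(e)\, x(E[e \succsim_v]) = \tfrac{1}{2}\bigl(1 + \sum_{e \in E(v)} x(e)\, x(E[e \sim_v])\bigr)$. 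Substituting this back, the constant terms $|V|/2$ and the ``diagonal'' terms $\sum_{e \in E} x(e)^2$ cancel, leaving exactly the claimed right-hand side.

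Given the identity, the conclusion is immediate. The left-hand side is a sum of non-negative terms (as $x(e) \ge 0$ and $s(e) \ge 0$), while the right-hand side is non-positive, so both sides are $0$ and every individual term vanishes. From $x(e)\, s(e) = 0$ for all $e$ we get $s(e) = 0$, i.e.\ (S1), whenever $x(e) > 0$. From $x(e)\, x(E[e \sim_v] \setminus \{e\}) = 0$ for all $v \in V$ and $e \in E(v)$ — equivalently $x(e)x(f) = 0$ for every vertex $v$ and every pair of distinct $\sim_v$-tied edges $e,f \in E(v)$ — we get that $x(e) > 0$ forces $x(f) = 0$ for every $f \in E[e \sim_v] \setminus \{e\}$ with $v \in e$, which is (S2). (Alternatively, (S1) can be obtained from linear programming duality together with complementary slackness, as in \cite{HuG21}, after which (S2) follows; the identity above simply produces both at once.)

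The step I expect to be the main obstacle is the bookkeeping in the identity, not the idea. In the tie-free case one simply sums the stability inequalities over all edges; with ties that sum is polluted by the cardinalities of the tie classes and no longer telescopes, which is why one weights the $e$-th inequality by $x(e)$ and leans on $x(E(v))^2 = 1$. Verifying that, after regrouping by vertex, every quadratic contribution cancels except the tied cross-terms is the only delicate part.
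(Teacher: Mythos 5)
Your proof is correct: I verified the slack reformulation ($s(e)\ge 0$ is exactly \eqref{eq_2:constraint_super} after substituting \eqref{eq_1:constraint_super}, and (S1) for $e$ is $s(e)=0$), and the quadratic identity does hold --- regrouping by vertex, the term $\sum_{e\in E(v)}x(e)\,x(E[e\succsim_v])$ equals $\tfrac12(1+T_v)$ with $T_v=\sum_{e\in E(v)}x(e)\,x(E[e\sim_v])$ by the three-way split of $x(E(v))^2=1$, and then $\sum_e x(e)=|V|/2$ kills the constants while the diagonal $\sum_e x(e)^2$ cancels against the diagonal part of $-\tfrac12\sum_v T_v$, leaving precisely $-\tfrac12\sum_v\sum_{e\in E(v)}x(e)\,x(E[e\sim_v]\setminus\{e\})$. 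The paper takes a different route in execution: it sets up the linear program \eqref{eq_1:self_dual_super} maximizing $x(E)$, shows that $(\alpha_x,x)$ with $\alpha_x(v)=x(E(v))$ is feasible for the dual \eqref{eq_2:self_dual_super} with the same objective value, concludes via weak duality that both are optimal, and then reads (S1) and (S2) off the complementary slackness conditions. Your identity is, in substance, that complementary-slackness bookkeeping carried out by hand for this self-dual pair --- you weight the slack of the constraint for $e$ by the ``dual variable'' $x(e)$ and sum --- so the two arguments are the same computation in different packaging. What your version buys is self-containedness and economy: no duality or complementary slackness theorems need to be invoked, and both (S1) and (S2) fall out of a single sign argument (a non-negative quantity equal to a non-positive one forces every term to vanish). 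What the paper's version buys is that the self-duality of the system is made explicit as a structural fact (useful for situating the result next to \cite{AbeledoR94,TeoS98,HuG21}), with each step a standard citation. Either proof is acceptable.
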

\begin{proof} 
Consider the following linear program.
\begin{equation} \label{eq_1:self_dual_super}
\mbox{Maximize} \ \ \ \ 
 x(E)  \ \ \ \ 
\mbox{subject to} \ \ \ \ 
\mbox{\eqref{eq_1:constraint_super}, \eqref{eq_2:constraint_super}}, \ 
x \in \mathbb{R}_+^E. 
\end{equation}
Then the dual problem of \eqref{eq_1:self_dual_super} is described as follows. 
\begin{equation} \label{eq_2:self_dual_super}
\begin{array}{cl}
\mbox{Minimize} \ \ 
&\alpha(V) 
-
\beta(E) \vspace{2mm}\\ 
\mbox{subject to} \ \ 
& 
\displaystyle{
\sum_{v \in e}
\alpha(v)
-
\beta(e)
- 
\sum_{v \in e}
\beta(E[e \succ_v])
\ge 1} \ \ 
\ \ \ \mbox{($\forall e \in E$)} \vspace{1mm}\\
& 
(\alpha,\beta) \in \mathbb{R}^V \times \mathbb{R}_+^E.
\end{array}
\end{equation}
For each feasible solution $x$ to \eqref{eq_1:self_dual_super}, 
we define $\alpha_x \in \mathbb{R}^V$ by 
$\alpha_x(v) \coloneqq x(E(v))$.

\begin{claim} \label{claim:claim_1:lemma:self_dual_super} 
For every feasible solution $x$ to \eqref{eq_1:self_dual_super}, 
$(\alpha_x,x)$ is a feasible solution to 
\eqref{eq_2:self_dual_super}. 
\end{claim}
\begin{proof} 
Let $x$ be a feasible solution to \eqref{eq_1:self_dual_super}.
Then for every edge $e \in E$, 
since $x \in \mathbb{R}_+^E$, 
\begin{equation*}
\begin{split}
& \sum_{v \in e}
\alpha_x(v)
-
x(e)
- 
\sum_{v \in e}
x(E[e \succ_v])
= 
\sum_{v \in e}
x(E(v))
-
x(e)
- 
\sum_{v \in e}
x(E[e \succ_v])\\
& = 
x(e)
+
\sum_{v \in e}
x(E[\succsim_v e] \setminus \{e\})
\ge 
x(e)
+
\sum_{v \in e}
x(E[\succ_v e]) \ge 1.
\end{split}
\end{equation*}
This completes the proof. 
\end{proof} 

\begin{claim} \label{claim:claim_2:lemma:self_dual_super} 
For every feasible solution $x$ to \eqref{eq_1:self_dual_super}, 
$x$ and $(\alpha_x,x)$ are optimal solutions to 
\eqref{eq_1:self_dual_super} and 
\eqref{eq_2:self_dual_super}, respectively.
\end{claim}
\begin{proof} 
For every feasible solution $x$ to \eqref{eq_1:self_dual_super}, 
since 
\begin{equation*}
\alpha_x(V) 
-
x(E)
= 
\sum_{v \in V}
x(E(v))
-
x(E)
= 
2x(E)
-
x(E)
=
x(E), 
\end{equation*}
Claim~\ref{claim:claim_1:lemma:self_dual_super}
and 
the duality theorem of linear programming
(see, e.g., \cite[Theorem~3.7]{ConfortiCZ14}) imply 
that 
$x$ and $(\alpha_x,x)$ are optimal solutions to 
\eqref{eq_1:self_dual_super} and 
\eqref{eq_2:self_dual_super}, respectively.
\end{proof}

Let $x$ be a vector in $\mathbb{R}_+^E$ satisfying 
\eqref{eq_1:constraint_super} and \eqref{eq_2:constraint_super}.
Then Claim~\ref{claim:claim_2:lemma:self_dual_super}
implies that 
$x$ and $(\alpha_x,x)$ are optimal solutions to 
\eqref{eq_1:self_dual_super} and 
\eqref{eq_2:self_dual_super}, respectively.
Let 
$e$ be an edge in $E$ such that
$x(e) > 0$.  
The 
complementary slackness theorem of linear programming
(see, e.g., \cite[Theorem 3.8]{ConfortiCZ14}) 
implies (S1) and 
\begin{equation*}
\sum_{v \in e}
\alpha_x(v)
-
x(e)
- 
\sum_{v \in e}
x(E[e \succ_v])
=
x(e)
+
\sum_{v \in e}
x(E[\succsim_v e] \setminus \{e\}) = 1.
\end{equation*}
These statements imply that 
\begin{equation*}
\sum_{v \in e}
x(E[\sim_v e] \setminus \{e\}) = 0.
\end{equation*} 
Thus, since $x \in \mathbb{R}_+^E$, 
this implies (S2).
\end{proof} 

In what follows, we fix a feasible solution $x$ to 
${\bf P}$. 
Then we prove that we can construct a super-stable matching in $G$ 
from $x$.
This completes the proof of Theorem~\ref{main_thm_super}.

For each vertex $v \in V$, we define $S(v)$ 
as the set of edges $e \in E(v)$ such that 
$x(e) > 0$. 

\begin{lemma} \label{lemma:strict_super}
For every vertex $v \in V$ and 
every pair of 
distinct edges $e,f\in S(v)$, 
exactly one of $e \succ_v f$, $f \succ_v e$ holds.
\end{lemma}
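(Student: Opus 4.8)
The plan is to suppose, for contradiction, that there is a vertex $v \in V$ and two distinct edges $e, f \in S(v)$ with $e \sim_v f$, and to derive a contradiction from Lemma~\ref{lemma:self_dual_super}. Since $x(e) > 0$ and $f \in E[e \sim_v] \setminus \{e\}$, statement (S2) of Lemma~\ref{lemma:self_dual_super} applied to the edge $e$ and the vertex $v$ yields $x(f) = 0$. But $f \in S(v)$ means $x(f) > 0$ by definition of $S(v)$, which is the desired contradiction. Combined with the hypothesis on $\succsim_v$ — that for every pair $e, f \in E(v)$ at least one of $e \succsim_v f$, $f \succsim_v e$ holds — and the definitions of $\succ_v$ and $\sim_v$, this shows that for distinct $e, f \in S(v)$ exactly one of $e \succ_v f$, $f \succ_v e$ holds.

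More concretely, I would first record that for any two distinct edges $e, f \in E(v)$, the comparability assumption gives that the three cases $e \succ_v f$, $f \succ_v e$, $e \sim_v f$ are exhaustive, and that $\succ_v$ and $\sim_v$ are mutually exclusive by their definitions (one cannot have both $e \succsim_v f, f \not\succsim_v e$ and $e \sim_v f$). So "exactly one of $e \succ_v f$, $f \succ_v e$" is equivalent to ruling out $e \sim_v f$. The work is therefore entirely in ruling out the tie case, and that is handled by the single application of (S2) above, since the fixed $x$ is feasible for ${\bf P}$ and hence in particular satisfies \eqref{eq_1:constraint_super} and \eqref{eq_2:constraint_super}, so Lemma~\ref{lemma:self_dual_super} applies to it.

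There is essentially no obstacle here: the lemma is an immediate corollary of the self-duality result (S2). The only subtlety worth stating explicitly is symmetry — (S2) is used with the roles of $e$ and $f$ as stated (the edge with $x(e) > 0$ is $e$, and $f$ ranges over $E[e \sim_v] \setminus \{e\}$) — but since $e \sim_v f$ is itself a symmetric relation and both $e$ and $f$ lie in $S(v)$, either assignment of roles produces the contradiction. I would phrase the proof in two or three sentences: invoke (S2) of Lemma~\ref{lemma:self_dual_super} for $e$ and $v$ (using $x \in {\bf P}$), conclude $x(f) = 0$ contradicting $f \in S(v)$, and then appeal to totality of $\succsim_v$ together with the definitions of $\succ_v$ and $\sim_v$ to upgrade "not $e \sim_v f$" to "exactly one of $e \succ_v f$, $f \succ_v e$".
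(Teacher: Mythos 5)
Your proposal is correct and follows exactly the paper's argument: the tie case $e \sim_v f$ is ruled out by applying (S2) of Lemma~\ref{lemma:self_dual_super} to $e$ (which has $x(e)>0$) and $f \in E[e \sim_v]\setminus\{e\}$, contradicting $x(f)>0$, and totality of $\succsim_v$ then gives the conclusion. No issues.
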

\begin{proof}
If there exist a vertex $v \in V$ and 
distinct edges $e,f \in S(v)$ 
such that $e \sim_v f$, then 
since $e,f\in S(v)$, this contradicts 
(S2) of Lemma~\ref{lemma:self_dual_super}.
\end{proof}

\begin{lemma} \label{lemme:uniqueness}
For every vertex $v \in V$,
an edge $e \in S(v)$ satisfying 
\begin{equation} \label{eq:def_d_super}
x(E[\succ_v e]) < 1/2 
\le x(E[\succsim_v e]) 
\end{equation} 
is uniquely determined. 
\end{lemma}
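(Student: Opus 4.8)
The plan is to use the strict ordering of $S(v)$ provided by Lemma~\ref{lemma:strict_super} to reduce the statement to an elementary fact about partial sums.

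Fix $v \in V$. By Lemma~\ref{lemma:strict_super}, the relation $\succsim_v$ restricted to $S(v)$ is total, transitive and antisymmetric, hence a linear order; so we may write $S(v) = \{e_1,\dots,e_m\}$ with $e_1 \succ_v e_2 \succ_v \cdots \succ_v e_m$, where $m \ge 1$ by \eqref{eq_1:constraint_super}. Since $x$ vanishes outside $S(v)$ and no two distinct edges of $S(v)$ are tied at $v$, for each $j \in [m]$ we have $x(E[\succ_v e_j]) = \sum_{i < j} x(e_i)$ and $x(E[\succsim_v e_j]) = \sum_{i \le j} x(e_i)$: the edges of $E[\succ_v e_j]$ (resp.\ $E[\succsim_v e_j]$) not in $S(v)$ contribute $0$, and $E[\sim_v e_j] \cap S(v) = \{e_j\}$.

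Next I would set $P_0 := 0$ and $P_j := \sum_{i \le j} x(e_i)$ for $j \in [m]$. Then $P_m = x(S(v)) = x(E(v)) = 1$ by \eqref{eq_1:constraint_super}, and $0 = P_0 < P_1 < \cdots < P_m = 1$ because each $x(e_i) > 0$. With this notation, condition \eqref{eq:def_d_super} for $e = e_j$ is exactly $P_{j-1} < 1/2 \le P_j$. Since the $P_j$ strictly increase from $0$ to $1$, the half-open intervals $(P_{j-1},P_j]$ with $j \in [m]$ partition $(0,1]$, so $1/2$ lies in exactly one of them; equivalently, the required index is the unique smallest $j$ with $P_j \ge 1/2$. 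This yields both existence and uniqueness.

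The argument is short, and the only step that needs care is the translation of $x(E[\succ_v e_j])$ and $x(E[\succsim_v e_j])$ into partial sums over $S(v)$: this uses that $x \equiv 0$ on $E \setminus S(v)$ together with (S2) of Lemma~\ref{lemma:self_dual_super} — invoked through Lemma~\ref{lemma:strict_super} — to guarantee that positively weighted edges incident to $v$ carry no ties, so that $E[\sim_v e_j] \cap S(v) = \{e_j\}$. Beyond that bookkeeping, everything reduces to monotonicity of partial sums, so I do not anticipate a genuine obstacle.
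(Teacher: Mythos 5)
Your proof is correct and follows essentially the same route as the paper: both arguments rest on Lemma~\ref{lemma:strict_super} to get a strict linear order on $S(v)$ and then observe that the cumulative weight $x(E[\succsim_v \cdot])$ strictly increases from $0$ to $1$ along that order, so it crosses $1/2$ exactly once. Your partial-sum packaging just makes explicit the existence and contradiction steps that the paper states more tersely.
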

\begin{proof}
Let $v$ be a vertex in $V$.
Then \eqref{eq_1:constraint_super}
implies that 
there exists an edge in $S(v)$ satisfying \eqref{eq:def_d_super}. 
Assume that there exist distinct edges $e,f \in E(v)$ 
satisfying \eqref{eq:def_d_super}.
Then Lemma~\ref{lemma:strict_super} implies that 
exactly one of 
$e \succ_v f$, $f \succ_v e$ holds. 
Assume that $e \succ_v f$.
Then 
$x(E[\succ_v f]) \ge x(E[\succsim_v e]) \ge 1/2$.
This contradicts the fact that 
$f$ satisfies \eqref{eq:def_d_super}. 
We can prove the case where 
$e \succ_v f$ in the same way.
This completes the proof.  
\end{proof} 

For each vertex $v \in V$,
we define $d(v)$
as the edge $e \in S(v)$ satisfying 
\eqref{eq:def_d_super}.
Lemma~\ref{lemme:uniqueness}
implies that 
$d(v)$ is well-defined.
For each vertex $v \in V$,
if $d(v) = \{v,w\}$, 
then we define 
$m(v) \coloneqq w$.

\begin{lemma} \label{lemma:edge}
Let $v$ be a vertex in $V$ such that 
$x(E[\succsim_v d(v)]) > 1/2$.
Then 
$m(m(v)) = v$. 
\end{lemma}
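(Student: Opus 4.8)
The plan is to prove that $d(v)$ itself satisfies the defining inequalities \eqref{eq:def_d_super} at the vertex $w \coloneqq m(v)$; then the uniqueness in Lemma~\ref{lemme:uniqueness} forces $d(w) = d(v) = \{v,w\}$, hence $m(w) = v$, which is precisely $m(m(v)) = v$.

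First I would record two ``balance'' identities for the edge $d(v)$. Since $d(v) \in S(v)$ we have $x(d(v)) > 0$, so $d(v) \in S(w)$ as well; applying (S1) of Lemma~\ref{lemma:self_dual_super} to $d(v)$ gives $x(d(v)) + x(E[\succ_v d(v)]) + x(E[\succ_w d(v)]) = 1$. Next, using (S2) of Lemma~\ref{lemma:self_dual_super} (equivalently Lemma~\ref{lemma:strict_super}), every edge tied with $d(v)$ at $v$ other than $d(v)$ has $x$-value $0$, and similarly at $w$; since $E[\succsim_v d(v)]$ is the disjoint union of $E[\succ_v d(v)]$, $\{d(v)\}$, and the edges tied with $d(v)$ at $v$, this yields $x(E[\succsim_v d(v)]) = x(d(v)) + x(E[\succ_v d(v)])$, and likewise $x(E[\succsim_w d(v)]) = x(d(v)) + x(E[\succ_w d(v)])$. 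Combining the three identities gives
\[
x(E[\succsim_v d(v)]) + x(E[\succ_w d(v)]) = 1, \qquad x(E[\succ_v d(v)]) + x(E[\succsim_w d(v)]) = 1.
\]

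Then I would substitute the hypotheses. The assumption $x(E[\succsim_v d(v)]) > 1/2$ together with the first identity gives $x(E[\succ_w d(v)]) < 1/2$. Since $d(v)$ satisfies \eqref{eq:def_d_super} at $v$ we have $x(E[\succ_v d(v)]) < 1/2$, so the second identity gives $x(E[\succsim_w d(v)]) > 1/2$. Hence $d(v) \in S(w)$ satisfies $x(E[\succ_w d(v)]) < 1/2 \le x(E[\succsim_w d(v)])$, i.e.\ \eqref{eq:def_d_super} at $w$, so Lemma~\ref{lemme:uniqueness} gives $d(w) = d(v)$, which means $m(w) = v$ and therefore $m(m(v)) = v$. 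I do not expect a genuine obstacle here; the only point needing care is the passage from $x(E[\succsim_v d(v)])$ to $x(d(v)) + x(E[\succ_v d(v)])$ (and its analogue at $w$), which rests on the tie-breaking provided by (S2)/Lemma~\ref{lemma:strict_super}. Everything else is a one-line substitution into (S1).
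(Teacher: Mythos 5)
Your proof is correct and rests on the same ingredients as the paper's: the identity $x(d(v)) + x(E[\succ_v d(v)]) + x(E[\succ_w d(v)]) = 1$ from (S1), the conversion between $x(E[\succsim_u d(v)])$ and $x(d(v)) + x(E[\succ_u d(v)])$ via (S2), and the bound $x(E[\succ_v d(v)]) < 1/2$ from \eqref{eq:def_d_super}. The only difference is presentational: you argue the contrapositive directly, showing $d(v)$ satisfies \eqref{eq:def_d_super} at $w$ and invoking the uniqueness of Lemma~\ref{lemme:uniqueness}, whereas the paper assumes $m(w) \neq v$ and splits into two cases, deriving a contradiction with \eqref{eq_2:constraint_super} in one case and with (S1) in the other.
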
 
\begin{proof}
Define $w \coloneqq m(v)$.
Then $d(v) = \{v,w\}$. 
Assume that $m(w) \neq v$. 
Since $d(w) \neq d(v)$,
we have at least one of 
$x(E[\succsim_w d(v)]) < 1/2$, 
$x(E[\succ_w d(v)]) \ge 1/2$. 

If $x(E[\succsim_w d(v)]) < 1/2$, then 
since 
\eqref{eq:def_d_super}
implies that 
$x(E[\succ_v d(v)]) < 1/2$, 
\begin{equation*}
1 > 
x(E[\succ_v d(v)]) + 
x(E[\succsim_w d(v)])
\ge x(d(v)) + 
x(E[\succ_v d(v)]) + 
x(E[\succ_w d(v)]).
\end{equation*}
However, 
this contradicts \eqref{eq_2:constraint_super}.
Furthermore, if $x(E[\succ_w d(v)]) \ge 1/2$, then 
since the assumption of this lemma 
implies that 
$x(E[\succsim_v d(v)]) > 1/2$, 
we have 
\begin{equation*}
1 < 
x(E[\succsim_v d(v)]) + 
x(E[\succ_w d(v)])
= x(d(v)) + 
x(E[\succ_v d(v)]) + 
x(E[\succ_w d(v)]),
\end{equation*}
where the equation follows from 
(S2) of Lemma~\ref{lemma:self_dual_super}. 
However, since $d(v) \in S(v)$, 
this contradicts (S1) of 
Lemma~\ref{lemma:self_dual_super}. 
\end{proof}

Define $V^{\ast}$ as the set of vertices $v \in V$ 
such that $m(m(v)) = v$.
Then 
Lemma~\ref{lemma:edge} implies that, for every 
vertex $v \in V \setminus V^{\ast}$, 
$x(E[\succsim_v d(v)]) = 1/2$. 

\begin{lemma} \label{lemma:cycle}
For every vertex $v \in V$, if 
$x(E[\succsim_v d(v)]) = 1/2$, then 
$x(E[\succ_{m(v)} d(v)]) = 1/2$. 
\end{lemma}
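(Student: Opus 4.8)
The plan is to argue by contradiction, exactly parallel to the proof of Lemma~\ref{lemma:edge}. Fix a vertex $v \in V$ with $x(E[\succsim_v d(v)]) = 1/2$, and write $w \coloneqq m(v)$, so that $d(v) = \{v,w\}$. Suppose for contradiction that $x(E[\succ_w d(v)]) \neq 1/2$. Since $d(v) \in E(w)$, by constraint \eqref{eq_1:constraint_super} applied at $w$ we have $x(E[\succ_w d(v)]) + x(E[\succsim_w d(v)]) = x(E(w)) + x(d(v))$ — more usefully, $x(E[\succ_w d(v)]) \le x(E[\succsim_w d(v)]) \le 1$, and these two quantities differ by $x(d(v)) > 0$. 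So the negation of the conclusion splits into two cases: either $x(E[\succ_w d(v)]) > 1/2$, or $x(E[\succ_w d(v)]) < 1/2$; in the latter case, since $d(v) \in S(w)$, Lemma~\ref{lemme:uniqueness} would force $d(w) = d(v)$, but then we need to also rule out $x(E[\succsim_w d(v)]) < 1/2$ separately — actually the cleaner dichotomy, mirroring Lemma~\ref{lemma:edge}, is: either $x(E[\succ_w d(v)]) \ge 1/2$, or $x(E[\succsim_w d(v)]) < 1/2$ (one of these must hold whenever $x(E[\succ_w d(v)]) \neq 1/2$, since if $x(E[\succ_w d(v)]) < 1/2$ then either it equals... no: if $x(E[\succ_w d(v)]) < 1/2$ and $x(E[\succsim_w d(v)]) \ge 1/2$, that is fine and means $d(w) = d(v)$). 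Let me instead keep it symmetric to Lemma~\ref{lemma:edge}: I will handle the two cases $x(E[\succsim_w d(v)]) < 1/2$ and $x(E[\succ_w d(v)]) \ge 1/2$, and remark that these exhaust the possibilities when $x(E[\succ_w d(v)]) \neq 1/2$.

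In the first case, $x(E[\succsim_w d(v)]) < 1/2$, I combine this with $x(E[\succ_v d(v)]) < 1/2$, which is the left inequality of \eqref{eq:def_d_super} for $d(v) = d(v)$ at $v$. Then
\[
1 > x(E[\succ_v d(v)]) + x(E[\succsim_w d(v)]) \ge x(d(v)) + x(E[\succ_v d(v)]) + x(E[\succ_w d(v)]),
\]
since $x(E[\succsim_w d(v)]) \ge x(d(v)) + x(E[\succ_w d(v)])$ (as $d(v)$ and the edges strictly above it at $w$ are distinct edges in $E[\succsim_w d(v)]$). This contradicts \eqref{eq_2:constraint_super} applied to the edge $d(v)$. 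In the second case, $x(E[\succ_w d(v)]) \ge 1/2$, I combine this with the hypothesis $x(E[\succsim_v d(v)]) = 1/2$ to get
\[
1 \le x(E[\succsim_v d(v)]) + x(E[\succ_w d(v)]) = x(d(v)) + x(E[\succ_v d(v)]) + x(E[\succ_w d(v)]),
\]
where the equality uses (S2) of Lemma~\ref{lemma:self_dual_super} applied at $v$ to the edge $d(v) \in S(v)$ to rewrite $x(E[\succsim_v d(v)]) = x(d(v)) + x(E[\succ_v d(v)])$ (all edges $\sim_v d(v)$ other than $d(v)$ have $x$-value $0$). Since $d(v) \in S(v)$, this contradicts (S1) of Lemma~\ref{lemma:self_dual_super}, which asserts the same sum equals exactly $1$ — wait, I need a strict inequality; so in this subcase I should actually derive $1 < \dots$ rather than $1 \le \dots$, which requires $x(E[\succ_w d(v)]) > 1/2$, not just $\ge 1/2$. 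Handling this correctly is the one place needing care: if $x(E[\succ_w d(v)]) = 1/2$ exactly, that is the conclusion we want, so in the contradiction hypothesis we genuinely have strict inequality $x(E[\succ_w d(v)]) > 1/2$, and then $1 < x(d(v)) + x(E[\succ_v d(v)]) + x(E[\succ_w d(v)])$, contradicting (S1).

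The main obstacle is bookkeeping the strictness of the inequalities and making sure the case split is clean: the negation of "$x(E[\succ_w d(v)]) = 1/2$" must be shown to fall under one of the two cases above, and in the $x(E[\succ_w d(v)]) \ge 1/2$ branch I must use the strict version $> 1/2$ to get a strict contradiction with (S1). Everything else is a direct transcription of the computations in the proof of Lemma~\ref{lemma:edge}, using \eqref{eq:def_d_super}, \eqref{eq_2:constraint_super}, and (S1)/(S2) of Lemma~\ref{lemma:self_dual_super}.
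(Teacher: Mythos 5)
There is a genuine gap: your case analysis does not exhaust the negation of the conclusion. Writing $w = m(v)$, you handle the cases $x(E[\succsim_w d(v)]) < 1/2$ and $x(E[\succ_w d(v)]) > 1/2$ (both correctly, by the same computations as in the proof of Lemma~\ref{lemma:edge}), but the remaining possibility $x(E[\succ_w d(v)]) < 1/2 \le x(E[\succsim_w d(v)])$ falls under neither case; by Lemma~\ref{lemme:uniqueness} this is exactly the situation $d(w) = d(v)$, i.e., $m(w) = v$, and it is not vacuous a priori. You notice this yourself in the parenthetical and dismiss it with ``that is fine and means $d(w) = d(v)$,'' and later assert that your two cases ``exhaust the possibilities when $x(E[\succ_w d(v)]) \neq 1/2$'' --- but they do not: in the omitted sub-case you still have $x(E[\succ_w d(v)]) < 1/2$, which is precisely the negation of what you must prove, and no contradiction has been derived. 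To close the gap you need one more application of the constraints: since $d(v) \in S(v)$, (S2) of Lemma~\ref{lemma:self_dual_super} gives $x(d(v)) + x(E[\succ_v d(v)]) = x(E[\succsim_v d(v)]) = 1/2$, and then \eqref{eq_2:constraint_super} for the edge $d(v)$ forces $x(E[\succ_w d(v)]) \ge 1 - 1/2 = 1/2$, contradicting $x(E[\succ_w d(v)]) < 1/2$.

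It is worth noting that the paper avoids the contradiction argument and the case split entirely. Its proof is a direct two-line computation: (S2) gives $x(E[\succsim_v d(v)]) = x(d(v)) + x(E[\succ_v d(v)])$, and (S1) gives $x(d(v)) + x(E[\succ_v d(v)]) + x(E[\succ_w d(v)]) = 1$, so $x(E[\succ_w d(v)]) = 1 - 1/2 = 1/2$ immediately. Your second case already contains exactly this computation; running it as an equality rather than inside a proof by contradiction proves the lemma outright and makes the problematic case split unnecessary.
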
 
\begin{proof}
Let $v$ be a vertex in $V$ 
such that $x(E[\succsim_v d(v)]) = 1/2$.
Then since $d(v) \in S(v)$, 
it follows from
(S1) and (S2) of Lemma~\ref{lemma:self_dual_super} that 
\begin{equation*}
x(E[\succsim_v d(v)]) + x(E[\succ_{m(v)} d(v)])
= 
x(d(v)) + x(E[\succ_v d(v)]) + x(E[\succ_{m(v)} d(v)]) 
= 1.
\end{equation*}
Thus, since 
$x(E[\succsim_v d(v)]) = 1/2$, 
we have 
$x(E[\succ_{m(v)} d(v)]) = 1/2$.
\end{proof}

\begin{lemma} \label{lemma:not_ast}
For every vertex $v \in V \setminus V^{\ast}$,
$m(v) \in V \setminus V^{\ast}$. 
\end{lemma}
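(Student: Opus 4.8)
The plan is to prove Lemma~\ref{lemma:not_ast} by contradiction: suppose $v \in V \setminus V^{\ast}$ but $w \coloneqq m(v) \in V^{\ast}$. Since $v \notin V^{\ast}$, the remark following Lemma~\ref{lemma:edge} gives $x(E[\succsim_v d(v)]) = 1/2$, and hence Lemma~\ref{lemma:cycle} yields $x(E[\succ_{w} d(v)]) = 1/2$. On the other hand, $w \in V^{\ast}$ means $m(m(w)) = w$; I would like to conclude that $m(w) = v$, i.e.\ $d(w) = \{v,w\} = d(v)$, so that I can compare the quantities attached to the edge $d(v)$ at both of its endpoints and derive a numerical contradiction with \eqref{eq_2:constraint_super} or with (S1) of Lemma~\ref{lemma:self_dual_super}.

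First I would pin down $d(w)$. From $x(E[\succ_{w} d(v)]) = 1/2$ we get $x(E[\succsim_{w} d(v)]) \ge 1/2$; combined with $x(E[\succ_{w} d(v)]) = 1/2 \not< 1/2$, the edge $d(v)$ fails the left inequality of \eqref{eq:def_d_super} at $w$ only if we read it as strict — so in fact $d(v)$ does \emph{not} satisfy \eqref{eq:def_d_super} at $w$, and by the uniqueness in Lemma~\ref{lemme:uniqueness}, $d(w) \neq d(v)$, hence $m(w) \neq v$. But then consider the edge $d(w)$ at its endpoint $w$: we have $x(E[\succ_{w} d(w)]) < 1/2 \le x(E[\succsim_{w} d(w)])$, while $x(E[\succ_{w} d(v)]) = 1/2$. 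Since all edges in $S(w)$ are strictly ordered by Lemma~\ref{lemma:strict_super} and $d(v) \in S(v)$ need not be in $S(w)$, I need to be a little careful: using \eqref{eq_1:constraint_super}, $x(E[\succsim_{w} d(v)]) + x(E[\succ_w d(v)]) = 1 + x(d(v)) \ge 1$ would already be forced the wrong way, so instead I compare $d(w)$ and $d(v)$ directly in $\succsim_w$. Because $x(E[\succ_{w} d(v)]) = 1/2$ but $x(E[\succ_w d(w)]) < 1/2$, we must have $d(w) \succsim_w d(v)$ with $d(w) \neq d(v)$, in fact $d(w) \succ_w d(v)$ strictly (if they were tied, (S2) would force one of them to have $x$-value $0$, impossible since both are in their respective $S$-sets and hence positive only if the edge is genuinely incident — here $d(w) \in S(w)$, and $d(v)$ is incident to $w$ with $x(d(v)) = x(E(v) \cap E(w)) > 0$, so $d(v) \in S(w)$ too, and then strictness follows from Lemma~\ref{lemma:strict_super}).

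With $d(w) \succ_w d(v)$ and $x(E[\succsim_w d(v)]) = x(E[\succ_w d(v)]) + x(d(v)) = 1/2 + x(d(v))$, I get $x(E[\succsim_w d(w)]) \le 1/2 + x(d(v)) - x(d(w))$... at this point the cleanest route is to instead apply Lemma~\ref{lemma:edge} to $w$: since $w \in V^{\ast}$, in particular $m(m(w)) = w$; if moreover $x(E[\succsim_w d(w)]) > 1/2$ then Lemma~\ref{lemma:edge} gives $m(m(w)) = w$ (already known) but more usefully, running the argument of Lemma~\ref{lemma:edge} in the other direction, $x(E[\succsim_{m(w)} d(w)]) > 1/2$ is not what I want either. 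The real key step, and the main obstacle, is to show $m(w) = v$: I expect this to follow by observing that $d(v) = \{v,w\}$ has $x(d(v)) > 0$, so $d(v) \in S(w)$, and then using Lemma~\ref{lemma:cycle}'s conclusion $x(E[\succ_w d(v)]) = 1/2$ together with $x(E[\succsim_w d(v)]) = 1/2 + x(d(v))$ to check the defining inequalities \eqref{eq:def_d_super} of $d(w)$: namely $x(E[\succ_w d(v)]) = 1/2 \not< 1/2$ shows $d(v)$ itself is not $d(w)$, yet any edge $g \succ_w d(v)$ has $x(E[\succsim_w g]) \le x(E[\succ_w d(v)]) = 1/2$... so actually $d(w)$ must satisfy $d(w) \succsim_w d(v)$ is impossible with the strict upper part, forcing $d(w) \preceq_w d(v)$, hence $d(w) = d(v)$ after all by the $1/2$-threshold bookkeeping — contradicting $m(w) \neq v$. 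Thus the contradiction is reached, and $m(v) \in V \setminus V^{\ast}$. I would write this up by carefully choosing which of the two inequalities \eqref{eq_2:constraint_super} / (S1) to violate once $m(w)$ is determined, mirroring the two-case structure already used in the proof of Lemma~\ref{lemma:edge}.
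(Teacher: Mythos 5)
Your proof does not go through. The decisive step is the one at the very end, where you claim that $d(w) \succ_w d(v)$ is impossible because every edge $g \succ_w d(v)$ satisfies $x(E[\succsim_w g]) \le x(E[\succ_w d(v)]) = 1/2$, and you use this to force $d(w) = d(v)$ and hence a contradiction with $m(w) \neq v$. But the defining condition \eqref{eq:def_d_super} only requires $x(E[\succsim_w d(w)]) \ge 1/2$, a \emph{non-strict} inequality, so combining it with $x(E[\succsim_w d(w)]) \le 1/2$ yields the equality $x(E[\succsim_w d(w)]) = 1/2$ rather than a contradiction. In fact $d(w) \succ_w d(v)$ together with $x(E[\succsim_w d(w)]) = 1/2$ is perfectly consistent with everything you establish at the vertex $w$, and the conclusion $m(w) = v$ that you are driving toward is genuinely false under your hypotheses: you yourself correctly deduce $m(w) \neq v$ from $x(E[\succ_w d(v)]) = 1/2 \not< 1/2$ and Lemma~\ref{lemme:uniqueness} (note also that $m(w) = v$ would give $m(m(v)) = v$, i.e.\ $v \in V^{\ast}$, contradicting the hypothesis outright). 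So no contradiction can be extracted at $w$ alone.

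The missing idea is that the contradiction lives at the \emph{third} vertex $u \coloneqq m(w)$, which is where the hypothesis $w \in V^{\ast}$ is actually used: $m(u) = w$, hence $d(u) = \{u,w\} = d(w)$. Having established (correctly, as sketched above) that $x(E[\succsim_w d(w)]) = 1/2$, one applies Lemma~\ref{lemma:cycle} a second time, now at $w$, to get $x(E[\succ_u d(w)]) = 1/2$; this contradicts the strict inequality $x(E[\succ_u d(u)]) < 1/2$ in \eqref{eq:def_d_super}. That is exactly the paper's argument. Your write-up also carries several abandoned false starts (the attempt to prove $m(w) = v$, the aside about $x(E[\succsim_w d(v)]) + x(E[\succ_w d(v)])$) that obscure the valid portions; as written, the proof cannot be repaired without adding the step at $u$.
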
 
\begin{proof}
Let $v$ be a vertex in $V \setminus V^{\ast}$. 
Assume that $m(v) \in V^{\ast}$.
Define 
$w \coloneqq m(v)$ and
$u \coloneqq m(w)$.
Then since $w \in V^{\ast}$, $m(u) = w$. 
Define $e \coloneqq \{v,w\}$ and 
$f \coloneqq \{w,u\}$. 

Lemma~\ref{lemma:cycle} 
implies that 
$x(E[\succ_w e]) = 1/2$.
Since $f = d(w)$, 
$x(E[\succsim_w f]) \ge 1/2$.  
Assume that 
$x(E[\succsim_w f]) > 1/2$.  
If $f \succ_w e$, then 
\begin{equation*}
1/2 
< x(E[\succsim_w f]) 
\le x(E[\succ_w e]) = 1/2. 
\end{equation*}
This is a contradiction. Thus, $e \succsim_w f$.
This implies that 
$x(E[\succ_w f]) \ge 
x(E[\succ_w e]) = 1/2$.
This contradicts the fact that 
$f = d(w)$. 
Thus, $x(E[\succsim_w f]) = 1/2$. 

Since 
$x(E[\succsim_w f]) = 1/2$, 
Lemma~\ref{lemma:cycle} implies that 
$x(E[\succ_u f]) = 1/2$.
However, since 
$f = d(u)$, 
\eqref{eq:def_d_super}
implies that 
$x(E[\succ_u f]) < 1/2$.
This is a contradiction. 
\end{proof} 

\begin{lemma} \label{lemma:unique_p}
For every vertex $v \in V \setminus V^{\ast}$,
there exists exactly one vertex $w \in V \setminus V^{\ast}$
such that $m(w) = v$.
\end{lemma}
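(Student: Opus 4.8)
The plan is to show that the map $m$ restricts to a \emph{bijection} of $V \setminus V^{\ast}$ onto itself; both the existence and the uniqueness asserted in the statement are then immediate.

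First, that $m$ sends $V \setminus V^{\ast}$ into itself is exactly Lemma~\ref{lemma:not_ast}, which is already available, so nothing new is needed there. The heart of the argument is to prove that $m$ is \emph{injective} on $V \setminus V^{\ast}$ — equivalently, the ``at most one'' half of the statement. I would argue by contradiction: suppose $w,w' \in V \setminus V^{\ast}$ are distinct with $m(w) = m(w') = v$, and set $e := d(w) = \{w,v\}$ and $e' := d(w') = \{w',v\}$, so $e \neq e'$ and $e,e' \in S(v)$ (since $e \in S(w)$ and $e' \in S(w')$ force $x(e) > 0$ and $x(e') > 0$). Because $w,w' \in V \setminus V^{\ast}$, the remark following Lemma~\ref{lemma:edge} gives $x(E[\succsim_w d(w)]) = x(E[\succsim_{w'} d(w')]) = 1/2$, and then Lemma~\ref{lemma:cycle} applied to $w$ and to $w'$ yields $x(E[\succ_v e]) = x(E[\succ_v e']) = 1/2$. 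By Lemma~\ref{lemma:strict_super}, exactly one of $e \succ_v e'$, $e' \succ_v e$ holds, and by symmetry it suffices to treat $e \succ_v e'$. Transitivity of $\succsim_v$ then forces $E[\succsim_v e] \subseteq E[\succ_v e']$, and combining this with (S2) of Lemma~\ref{lemma:self_dual_super} (which makes every edge of $E[\sim_v e] \setminus \{e\}$ vanish) I expect the chain
\[
1/2 < x(e) + x(E[\succ_v e]) = x(E[\succsim_v e]) \le x(E[\succ_v e']) = 1/2,
\]
which is the desired contradiction; hence $m$ is injective on $V \setminus V^{\ast}$.

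With injectivity in hand, the conclusion is a one-line counting step: $V \setminus V^{\ast}$ is finite and $m$ maps it injectively into itself, hence $m$ restricted to $V \setminus V^{\ast}$ is a bijection, so for each $v \in V \setminus V^{\ast}$ there is exactly one $w \in V \setminus V^{\ast}$ with $m(w) = v$. The main obstacle is the injectivity step: one has to notice that membership in $V \setminus V^{\ast}$ pins down the value $1/2$ on both of the relevant ``$\succsim_v$'' and ``$\succ_v$'' sums at $v$ (via the remark after Lemma~\ref{lemma:edge} and Lemma~\ref{lemma:cycle}), and then extract a \emph{strict} inequality from $x(e) > 0$ using the absence of ties among positive edges (Lemma~\ref{lemma:strict_super}) together with (S2) of the self-duality lemma. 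Everything else is routine manipulation of the transitive relation $\succsim_v$ and the finiteness argument.
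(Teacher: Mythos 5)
Your proof is correct and follows essentially the same route as the paper: reduce to injectivity of $m$ on $V \setminus V^{\ast}$ via Lemma~\ref{lemma:not_ast} and finiteness, derive $x(E[\succ_v e]) = x(E[\succ_v e']) = 1/2$ from the remark after Lemma~\ref{lemma:edge} and Lemma~\ref{lemma:cycle}, and extract a strict-inequality contradiction from $x(e) > 0$ using Lemma~\ref{lemma:strict_super}. The only cosmetic difference is that you route the strictness through $x(E[\succsim_v e])$ and (S2), whereas the paper uses the containment $E[\succ_v e] \subseteq E[\succ_v e'] \setminus \{e\}$ directly; both are sound.
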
 
\begin{proof}
Lemma~\ref{lemma:not_ast} implies that 
it suffices to prove that, for every vertex $v \in V \setminus V^{\ast}$, 
the number of vertices $w \in V \setminus V^{\ast}$ such that 
$m(w) = v$ is at most one. 
Let $v$ be a vertex in $V \setminus V^{\ast}$. 
Assume that there exist vertices 
$w,u \in V \setminus V^{\ast}$ such that 
$w \neq u$ and 
$m(w) = m(v) = v$. 

Define $e \coloneqq \{w,v\}$ and 
$f \coloneqq \{u,v\}$. 
Lemma~\ref{lemma:cycle} implies that 
$x(E[\succ_v e]) = x(E[\succ_v f]) = 1/2$. 
Since $e,f \in S(v)$, 
Lemma~\ref{lemma:strict_super}
implies that $e \not\sim_v f$. 
Without loss of generality, 
we assume 
that 
$e \succ_v f$.
In this case, 
$e \notin E[\succ_v e]$ 
and 
$e \in E[\succ_v f]$.
Thus, since $E[\succ_v e] \subseteq E[\succ_v f]$ follows from 
$e \succ_v f$, 
$E[\succ_v e]$ is a subset of 
$E[\succ_v f] \setminus \{e\}$. 
Since $x(e) > 0$, 
this implies that
\begin{equation*}
x(E[\succ_v e]) \le 
x(E[\succ_v f] \setminus \{e\}) < 
x(E[\succ_v f]). 
\end{equation*}
However, this contradicts the fact that 
$x(E[\succ_v e]) = x(E[\succ_v f])$. 
\end{proof} 

Let $v$ be a vertex in $V \setminus V^{\ast}$. 
Lemma~\ref{lemma:unique_p} implies that
there exists the unique 
vertex $w \in V \setminus V^{\ast}$
such that $m(w) = v$. 
Define $p(v) \coloneqq w$
and $d^-(v) \coloneqq d(w)$. 

For each vertex $v \in V^{\ast}$, 
we define $d^-(v) \coloneqq d(v)$. 

\begin{lemma} \label{lemma:cycle_2}
For every vertex $v \in V \setminus V^{\ast}$,
we have 
$x(E[\succ_v d^-(v)]) = 1/2$. 
\end{lemma}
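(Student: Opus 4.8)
The plan is to reduce the statement to a single application of Lemma~\ref{lemma:cycle} once the definitions of $p(v)$ and $d^-(v)$ have been unwound. Fix a vertex $v \in V \setminus V^{\ast}$ and set $w \coloneqq p(v)$. By the definition of $p$ together with Lemma~\ref{lemma:unique_p}, $w$ is the unique vertex in $V \setminus V^{\ast}$ with $m(w) = v$, and by definition $d^-(v) = d(w)$. Note also that $d(w) = \{w, m(w)\} = \{w,v\}$, so $d^-(v) \in E(v)$ and the expression $x(E[\succ_v d^-(v)])$ is well-defined.

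Next I would invoke the remark following Lemma~\ref{lemma:edge}: since $w \in V \setminus V^{\ast}$, we have $x(E[\succsim_w d(w)]) = 1/2$. This is precisely the hypothesis of Lemma~\ref{lemma:cycle} for the vertex $w$, so that lemma yields $x(E[\succ_{m(w)} d(w)]) = 1/2$. Substituting $m(w) = v$ and $d(w) = d^-(v)$ gives $x(E[\succ_v d^-(v)]) = 1/2$, which is the claim.

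There is essentially no obstacle: the argument is a direct composition of the remark after Lemma~\ref{lemma:edge} with Lemma~\ref{lemma:cycle}. The only point that needs care is the bookkeeping — ensuring that Lemma~\ref{lemma:cycle} is applied to the vertex $w = p(v)$ rather than to $v$ itself, and that $d^-(v)$ is correctly identified with $d(w)$; this is exactly where the definitions of $p$ and $d^-$ are used.
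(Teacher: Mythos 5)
Your proof is correct and follows exactly the paper's argument: apply Lemma~\ref{lemma:cycle} to $w = p(v)$, using that $w \in V \setminus V^{\ast}$ gives $x(E[\succsim_w d(w)]) = 1/2$ via the remark after Lemma~\ref{lemma:edge}, and then substitute $m(w) = v$ and $d(w) = d^-(v)$. You have merely made explicit the bookkeeping that the paper leaves implicit.
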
 
\begin{proof}
Lemma~\ref{lemma:cycle} implies that, 
for every vertex $v \in V \setminus V^{\ast}$,
\begin{equation*}
x(E[\succ_v d^-(v)]) 
= 
x(E[\succ_{m(p(v))} d(p(v))]) 
= 1/2.
\end{equation*}
This completes the proof. 
\end{proof} 

\begin{lemma} \label{lemma:relation_d_p}
For every vertex $v \in V \setminus V^{\ast}$,
we have 
$d(v) \succ_v d^-(v)$. 
\end{lemma}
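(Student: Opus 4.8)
The plan is to argue by contradiction, using that preferences are strict on $S(v)$ (Lemma~\ref{lemma:strict_super}) together with the defining inequality \eqref{eq:def_d_super} for $d(v)$ and the equality for $d^-(v)$ provided by Lemma~\ref{lemma:cycle_2}.

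First I would record two routine facts that let Lemma~\ref{lemma:strict_super} apply to the pair $d(v), d^-(v)$. Since $p(v) \in V \setminus V^{\ast}$ satisfies $m(p(v)) = v$, we have $d^-(v) = d(p(v)) = \{v, p(v)\}$; in particular $v \in d^-(v)$, and $x(d^-(v)) = x(d(p(v))) > 0$ because $d(p(v)) \in S(p(v))$, so $d^-(v) \in S(v)$. Moreover $d(v) \neq d^-(v)$: otherwise $m(v) = p(v)$, whence $m(m(v)) = m(p(v)) = v$, contradicting $v \notin V^{\ast}$. Hence, by Lemma~\ref{lemma:strict_super}, exactly one of $d(v) \succ_v d^-(v)$ and $d^-(v) \succ_v d(v)$ holds, and it suffices to exclude the second.

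So suppose $d^-(v) \succ_v d(v)$. The key step is the inclusion $E[\succ_v d^-(v)] \subseteq E[\succ_v d(v)]$: if $f \succ_v d^-(v)$, then from $f \succsim_v d^-(v) \succsim_v d(v)$ and transitivity we get $f \succsim_v d(v)$, while $d(v) \succsim_v f$ would combine with $f \succsim_v d^-(v)$ to give $d(v) \succsim_v d^-(v)$, contradicting $d^-(v) \succ_v d(v)$; hence $f \succ_v d(v)$. Also $d^-(v) \in E[\succ_v d(v)]$, whereas $d^-(v) \notin E[\succ_v d^-(v)]$ since $d^-(v) \succ_v d^-(v)$ would require both $d^-(v) \succsim_v d^-(v)$ and $d^-(v) \not\succsim_v d^-(v)$. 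Therefore, using $x \in \mathbb{R}_+^E$ and $x(d^-(v)) > 0$,
\[
x(E[\succ_v d(v)]) \ge x(E[\succ_v d^-(v)]) + x(d^-(v)) > x(E[\succ_v d^-(v)]).
\]
By Lemma~\ref{lemma:cycle_2} the rightmost quantity equals $1/2$, so $x(E[\succ_v d(v)]) > 1/2$, contradicting the left inequality of \eqref{eq:def_d_super} for $d(v)$. Thus $d^-(v) \succ_v d(v)$ is impossible, and we conclude $d(v) \succ_v d^-(v)$.

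The only mildly delicate part — the "main obstacle", such as it is — is the bookkeeping: verifying $d^-(v) \in S(v)$ and $d(v) \neq d^-(v)$ so that Lemma~\ref{lemma:strict_super} applies, and being careful in passing between $\succsim_v$ and $\succ_v$ when establishing the monotonicity of the sets $E[\succ_v \cdot]$. Once those are in place, the contradiction is a one-line inequality using $x(d^-(v)) > 0$ and Lemma~\ref{lemma:cycle_2}.
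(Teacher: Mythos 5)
Your proof is correct and follows essentially the same route as the paper: assume the preference goes the wrong way, use the resulting inclusion of upper contour sets together with a strictly positive mass term to push a quantity above the $1/2$ threshold supplied by Lemma~\ref{lemma:cycle_2}, and derive a contradiction. The only cosmetic differences are that you add the mass $x(d^-(v))>0$ and contradict the strict inequality in \eqref{eq:def_d_super}, whereas the paper adds $x(d(v))>0$ and contradicts the equality $x(E[\succsim_v d(v)])=1/2$ coming from Lemma~\ref{lemma:edge}, and that you invoke Lemma~\ref{lemma:strict_super} to reduce to the strict case while the paper simply negates $d(v)\succ_v d^-(v)$ to $d^-(v)\succsim_v d(v)$ via totality.
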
 
\begin{proof}
Let $v$ be a vertex in $V \setminus V^{\ast}$.
Lemma~\ref{lemma:cycle_2}
implies that 
$x(E[\succ_v d^-(v)]) = 1/2$. 
If $d^-(v) \succsim_v d(v)$, then 
since $x(d(v)) > 0$ follows from $d(v) \in S(v)$, 
we have 
\begin{equation*}
x(E[\succsim_v d(v)])
\ge 
x(E[\succ_v d(v)]) + x(d(v)) 
> 
x(E[\succ_v d(v)])
\ge 
x(E[\succ_v d^-(v)])
= 1/2. 
\end{equation*}
However, 
Lemma~\ref{lemma:edge} implies that 
$x(E[\succsim_v d(v)]) = 1/2$.
This is a contradiction.
\end{proof}

Define the subset $L \subseteq E$ as follows. 
For each pair of distinct vertices $v,w \in V$, 
$\{v,w\} \in L$ if and only if 
at least one of $m(v) = w$, $m(w) = v$ holds. 
Then $\bigcup_{e \in L}e = V$. 
By considering the directed graph such that 
its vertex set is $V$ and its arc set contains 
an arc from $v$ to $m(v)$ for each vertex $v \in V$, 
we can prove that 
there exist a set 
$\mathcal{C}$ of vertex-disjoint cycles in $G$ and 
a set $M$ of pairwise disjoint edges in $L$ satisfying the following conditions.
\begin{itemize}
\item
$\bigcup_{e \in M}e = V^{\ast}$.
\item
$\bigcup_{C \in \mathcal{C}}V(C) = V \setminus V^{\ast}$. 
\item
$\mathcal{C}$ is a decomposition of $L \setminus M$. 
\item
For every cycle $(v_0,v_1,\dots,v_k) \in \mathcal{C}$
and every integer $i \in [k]$,
we have $m(v_i) = v_{i+1}$.
\end{itemize}

\begin{lemma} \label{lemma:cycle_L}
Let $C = (v_0,v_1,\dots,v_k)$ be a cycle in $\mathcal{C}$.
\begin{description}
\item[(i)]
For every integer $i \in [k]$, 
$\{v_i,v_{i+1}\} \succ_{v_i} \{v_{i-1},v_i\}$.
\item[(ii)]
$k$ is even. 
\end{description} 
\end{lemma}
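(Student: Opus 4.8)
The plan is to prove the two parts in turn, exploiting the defining properties of $\mathcal{C}$ to identify the cycle edges with the edges $d(\cdot)$ and $d^{-}(\cdot)$. For \textbf{(i)}, I would fix $i \in [k]$ and note that, since $(v_0,\dots,v_k) \in \mathcal{C}$, we have $m(v_i) = v_{i+1}$ and $m(v_{i-1}) = v_i$, and moreover $\{v_0,v_1,\dots,v_k\} = V(C) \subseteq V \setminus V^{\ast}$. The first equality gives $d(v_i) = \{v_i,v_{i+1}\}$; the second, together with the uniqueness in Lemma~\ref{lemma:unique_p}, gives $p(v_i) = v_{i-1}$ and hence $d^{-}(v_i) = d(p(v_i)) = \{v_{i-1},v_i\}$. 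Then part (i) is exactly Lemma~\ref{lemma:relation_d_p} applied at $v_i \in V \setminus V^{\ast}$, namely $d(v_i) \succ_{v_i} d^{-}(v_i)$, i.e., $\{v_i,v_{i+1}\} \succ_{v_i} \{v_{i-1},v_i\}$.

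For \textbf{(ii)}, the strategy is to feed $C$ into constraint~\eqref{eq_3:constraint_super} and show that its left-hand side evaluated at $x$ equals exactly $k/2$. First I would check that $C$ qualifies: by part (i), $\{v_{i+1},v_i\} \succsim_{v_i} \{v_i,v_{i-1}\}$ for every $i \in [k]$, so $C$ is a dangerous closed walk; since $C$ is a cycle, ${\sf mult}(C) = 1 \le 4|E|$, hence $C \in \mathcal{D}$. With $e_i \coloneqq \{v_{i-1},v_i\}$ as in~\eqref{eq_3:constraint_super}, the identifications above read $e_i = d^{-}(v_i)$ and $e_{i+1} = d(v_i)$.

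It then remains to evaluate each summand $x(E[e_i \succsim_{v_i}] \setminus \{e_{i+1}\})$. Since $E(v_i)$ is the disjoint union of $E[\succ_{v_i} d^{-}(v_i)]$ and $E[d^{-}(v_i) \succsim_{v_i}]$, and $x(E[\succ_{v_i} d^{-}(v_i)]) = 1/2$ by Lemma~\ref{lemma:cycle_2}, condition~\eqref{eq_1:constraint_super} gives $x(E[e_i \succsim_{v_i}]) = 1/2$; and since $e_{i+1} = d(v_i) \succ_{v_i} d^{-}(v_i) = e_i$ by part (i), we have $e_{i+1} \notin E[e_i \succsim_{v_i}]$, so deleting it changes nothing and the summand equals $1/2$. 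Summing over $i \in [k]$ and invoking~\eqref{eq_3:constraint_super} yields $k/2 \le \lfloor k/2 \rfloor$, which forces $k$ to be even. I do not expect a substantive obstacle here: the only real care needed is the index bookkeeping (matching $e_i$, $d^{-}(v_i)$, $d(v_i)$ and checking $C \in \mathcal{D}$), while the quantitative core is an immediate consequence of Lemmas~\ref{lemma:cycle_2} and~\ref{lemma:relation_d_p} together with~\eqref{eq_1:constraint_super}.
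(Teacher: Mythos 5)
Your proposal is correct and follows essentially the same route as the paper: part (i) is the identification $e_i = d^{-}(v_i)$, $e_{i+1} = d(v_i)$ combined with Lemma~\ref{lemma:relation_d_p}, and part (ii) evaluates each summand of \eqref{eq_3:constraint_super} to $1/2$ via Lemma~\ref{lemma:cycle_2}, \eqref{eq_1:constraint_super}, and part (i), then concludes $k$ is even from $k/2 \le \lfloor k/2 \rfloor$. The only difference is cosmetic: the paper phrases (ii) as a contradiction for odd $k$, while you conclude directly.
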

\begin{proof}
{\bf (i)}
For every integer $i \in [k]$, 
$\{v_{i-1},v_i\} = d^-(v_i)$ and 
$\{v_i,v_{i+1}\} = d(v_i)$.
This implies that 
(i) follows from 
Lemma~\ref{lemma:relation_d_p}.

{\bf (ii)}
Define 
$e_i \coloneqq \{v_{i-1},v_i\}$ 
for each integer $i \in [k+1]$.
Then for every integer $i \in [k]$, since $e_i = d^-(v_i)$, 
\begin{equation*}
x(E[e_i \succsim_{v_i}] \setminus \{e_{i+1}\}) 
=
x(E[e_i \succsim_{v_i}]) 
=
x(E(v)) - x(E[\succ_{v_i} e_i])
= 1 - 1/2 = 1/2, 
\end{equation*}
where 
the first equation follows from (i), and 
the third equation follows from \eqref{eq_1:constraint_super} and 
Lemma~\ref{lemma:cycle_2}. 
Thus, if $k$ is odd, then 
\begin{equation*}
\sum_{i \in [k]} x(E[e_i \succsim_{v_i}] \setminus \{e_{i+1}\}) 
= k/2 > \lfloor k/2 \rfloor. 
\end{equation*}
However, since (i) implies that 
$C$ is dangerous, 
this contradicts \eqref{eq_3:constraint_super}. 
\end{proof} 

Since 
$\bigcup_{e \in L}e = V$,
Lemma~\ref{lemma:cycle_L}(ii)
implies that we can construct 
a 
matching $\mu$ in $G$ such that 
$\mu \subseteq L$
by taking all the edges in $M$ and 
every other edge along 
each cycle in $\mathcal{C}$.

\begin{lemma} \label{lemma:no_block}
Let $\mu$ be a matching in $G$ 
such that $\mu \subseteq L$, and let 
$e$ be an edge in $L \setminus \mu$.
Then $e$ does not weakly block $\mu$. 
\end{lemma}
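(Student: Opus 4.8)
The plan is to show that for any $e = \{a,b\} \in L \setminus \mu$ there is an endpoint of $e$ on which $e$ fails to weakly block $\mu$, and the natural candidate is the ``head'' of the edge under the orientation given by $m$. Since $e \in L$, by the definition of $L$ one endpoint, say $a$, satisfies $m(a) = b$; thus $e = d(a)$, and in particular $e = d^-(b)$ along the cycle structure, which is exactly the shape needed to invoke Lemma~\ref{lemma:relation_d_p} at $b$.

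First I would rule out the degenerate possibility $a \in V^{\ast}$. If $a \in V^{\ast}$, then $a$ lies in no cycle of $\mathcal{C}$ (the sets $V(C)$ partition $V \setminus V^{\ast}$), so every $L$-edge incident to $a$ must lie in $M$, because $\mathcal{C}$ decomposes $L \setminus M$ and every edge of a cycle in $\mathcal{C}$ has both endpoints in $V \setminus V^{\ast}$. Since $M$ is a matching with $\bigcup_{f \in M} f = V^{\ast}$, there is exactly one such edge, so it must be $e$; but then $\mu \subseteq L$ forces $\mu(a) = e$, contradicting $e \notin \mu$. Hence $a \in V \setminus V^{\ast}$, so $a = v_i$ for some cycle $C = (v_0,\dots,v_k) \in \mathcal{C}$, and then $b = m(a) = v_{i+1} \in V(C)$ as well.

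The heart of the argument is then to identify the $L$-edges incident to $b$. Because $V(C)$ is disjoint from $V^{\ast}$ and from the vertex set of every other cycle, the only edges of $L$ at $b = v_{i+1}$ are the two cycle edges $\{v_i,v_{i+1}\} = e$ and $\{v_{i+1},v_{i+2}\} = d(v_{i+1}) = d(b)$; moreover $e = \{v_i,v_{i+1}\}$ equals $d(p(b)) = d^-(b)$ by the definitions of $p$ and $d^-$ traced along the cycle. Since $\mu \subseteq L$ and $\mu(b) \neq e$ (as $e \notin \mu$), we conclude $\mu(b) = d(b)$. Finally Lemma~\ref{lemma:relation_d_p}, applied to $b \in V \setminus V^{\ast}$, gives $d(b) \succ_b d^-(b)$, i.e.\ $\mu(b) \succ_b e$, so $e \not\succsim_b \mu(b)$ and $e$ does not weakly block $\mu$ on $b$; therefore $e$ does not weakly block $\mu$.

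I expect the main obstacle to be the bookkeeping for the structure of $L$: pinning down precisely which edges of $L$ are incident to a given vertex, using the decomposition of $L$ into the matching $M$ and the vertex-disjoint cycles $\mathcal{C}$, and correctly identifying the cycle edge $e$ with $d^-(b)$ so that Lemma~\ref{lemma:relation_d_p} becomes applicable. Once that structural analysis is in place, the conclusion is immediate, and there is no need to appeal to \eqref{eq_2:constraint_super} or \eqref{eq_3:constraint_super} directly here.
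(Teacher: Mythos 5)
Your proposal is correct and follows essentially the same route as the paper: the paper also locates $e$ as a cycle edge $\{v_{i-1},v_i\}$ of some $C\in\mathcal{C}$, observes $\mu(v_i)=\{v_i,v_{i+1}\}=d(v_i)$, and concludes $\mu(v_i)\succ_{v_i}e$ via Lemma~\ref{lemma:cycle_L}(i), which is itself just Lemma~\ref{lemma:relation_d_p} applied along the cycle. You merely spell out the bookkeeping (ruling out $a\in V^{\ast}$, identifying the two $L$-edges at $b$) that the paper leaves implicit.
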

\begin{proof}
There exist a cycle 
$(v_0,v_1,\dots,v_k) \in \mathcal{C}$ and 
an integer $i \in [k]$ such that 
$e = \{v_{i-1},v_i\}$. 
In 
this case, $\mu(v_i) = \{v_i,v_{i+1}\}$.
Furthermore,
Lemma~\ref{lemma:cycle_L}(i) implies that 
$\mu(v_i) \succ_{v_i} e$.
This implies that 
$e$ does not weakly block $\mu$. 
\end{proof} 

\begin{lemma} \label{lemma:blocking}
Let $\mu$ be a matching in $G$ 
such that $\mu \subseteq L$, and
let 
$e$ be an edge in $E \setminus L$.
Assume that 
$e$ weakly blocks $\mu$.
Then for every vertex $v \in e$, 
$d(v) \succ_v e \succsim_v d^-(v)$.
\end{lemma}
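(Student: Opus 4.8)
The plan is to fix an arbitrary endpoint $v$ of $e$, write $e = \{v,w\}$, and prove the two relations $e \succsim_v d^-(v)$ and $d(v) \succ_v e$ at $v$; since $v$ ranges over both endpoints of $e$, this yields the claimed chain at every vertex of $e$. The only inputs are that $\mu$ is a matching with $\mu \subseteq L$, that $e \in E \setminus L$ weakly blocks $\mu$ (so $e \succsim_u \mu(u)$ for $u \in \{v,w\}$), the fixed feasible solution $x$, and the lemmas already proved.

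First I would establish $\mu(v) \succsim_v d^-(v)$ by pinning down the edges of $L$ incident to $v$. If $v \in V^{\ast}$, then, using Lemma~\ref{lemma:not_ast} together with the immediate fact that $m(m(v)) = v$ forces $m(v) \in V^{\ast}$, one checks that $d(v) = \{v, m(v)\}$ is the only edge of $L$ incident to $v$; since $\mu \subseteq L$ is a matching, this gives $\mu(v) = d(v) = d^-(v)$. If $v \in V \setminus V^{\ast}$, the edges of $L$ incident to $v$ are exactly $d(v) = \{v, m(v)\}$ and $d^-(v) = \{p(v), v\}$ (using Lemmas~\ref{lemma:not_ast} and~\ref{lemma:unique_p} to see that $p(v)$ is the unique vertex sent to $v$ by $m$), so $\mu(v) \in \{d(v), d^-(v)\}$; since $d(v) \succ_v d^-(v)$ by Lemma~\ref{lemma:relation_d_p}, in both cases $\mu(v) \succsim_v d^-(v)$. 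Combining with $e \succsim_v \mu(v)$ and transitivity of $\succsim_v$ gives $e \succsim_v d^-(v)$, and the same argument at $w$ gives $e \succsim_w d^-(w)$.

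Next, from $e \succsim_v d^-(v)$ and transitivity one obtains $E[\succ_v e] \subseteq E[\succ_v d^-(v)]$, so $x(E[\succ_v e]) \le x(E[\succ_v d^-(v)]) \le 1/2$ — the last inequality being an equality by Lemma~\ref{lemma:cycle_2} when $v \notin V^{\ast}$, and strict by the defining property~\eqref{eq:def_d_super} of $d(v)$ when $v \in V^{\ast}$ (where $d^-(v) = d(v)$) — and likewise $x(E[\succ_w e]) \le 1/2$. To deduce $d(v) \succ_v e$, suppose not, i.e., $e \succsim_v d(v)$ (this equivalence uses completeness of $\succsim_v$); then $E[\succ_v e] \subseteq E[\succ_v d(v)]$ by transitivity, and moreover if $x(e) > 0$ then $e \in S(v)$, so Lemma~\ref{lemma:strict_super} forces $e$ and $d(v)$ to be strictly comparable, whence $e \succ_v d(v)$ and $e \in E[\succ_v d(v)] \setminus E[\succ_v e]$. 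In either case $x(e) + x(E[\succ_v e]) \le x(E[\succ_v d(v)]) < 1/2$ by~\eqref{eq:def_d_super}, so adding $x(E[\succ_w e]) \le 1/2$ gives $x(e) + x(E[\succ_v e]) + x(E[\succ_w e]) < 1$, contradicting~\eqref{eq_2:constraint_super} for $e$. Hence $d(v) \succ_v e$, and with the previous step this gives $d(v) \succ_v e \succsim_v d^-(v)$; as $v$ was an arbitrary endpoint of $e$, the lemma follows.

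I expect the main obstacle to be the first step: correctly identifying which edges of $L$ meet a given vertex — hence which edge $\mu$ assigns there — requires assembling the structural facts about $m$, $p$, and $V^{\ast}$ carefully, and this is what makes the transitivity chains and the bound $x(E[\succ_w e]) \le 1/2$ at the other endpoint available. Once $\mu(v) \in \{d(v), d^-(v)\}$ is in hand, the rest is a short computation from~\eqref{eq_2:constraint_super},~\eqref{eq:def_d_super}, and the lemmas already established.
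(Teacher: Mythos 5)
Your proof is correct and follows essentially the same route as the paper: establish $\mu(v)\in\{d(v),d^-(v)\}$ and hence $e\succsim_v d^-(v)$ via Lemma~\ref{lemma:relation_d_p}, bound $x(E[\succ_u e])\le 1/2$ at both endpoints using Lemma~\ref{lemma:cycle_2} and \eqref{eq:def_d_super}, and contradict \eqref{eq_2:constraint_super} if $e\succsim_v d(v)$. The only cosmetic difference is that you split the last step on whether $x(e)>0$ and invoke Lemma~\ref{lemma:strict_super}, whereas the paper splits on $e\succ_v d(v)$ versus $e\sim_v d(v)$ and invokes (S2) directly; these are interchangeable since Lemma~\ref{lemma:strict_super} is itself an immediate consequence of (S2).
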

\begin{proof}
We first prove the following claims.

\begin{claim} \label{claim_1:lemma:blocking}
For every vertex $v \in e$, $x(E[\succ_v \mu(v)]) \le 1/2$.
\end{claim}
\begin{proof}
Let $v$ be a vertex in $e$. 
Since $\mu \subseteq L$, 
$\mu(v) \in \{d(v), d^-(v)\}$.
If $\mu(v) = d(v)$, then 
\eqref{eq:def_d_super} implies that 
$x(E[\succ_v \mu(v)]) < 1/2$.
Assume that 
$\mu(v) \neq d(v)$. 
Then 
$\mu(v) = d^-(v)$ and 
$d(v) \neq d^-(v)$. 
Since 
$d(v) \neq d^-(v)$,
we have 
$v \in V \setminus V^{\ast}$. 
Thus, Lemma~\ref{lemma:cycle_2}
implies that 
$1/2 
= x(E[\succ_v d^-(v)])
= x(E[\succ_v \mu(v)])$. 
\end{proof} 

\begin{claim} \label{claim_2:lemma:blocking}
For every vertex $v \in e$, 
$x(E[\succ_v e]) \le 1/2$.
\end{claim}
\begin{proof}
Let $v$ be a vertex in $e$. 
Then since $e \succsim_v \mu(v)$, 
we have
$E[\succ_v e] \subseteq E[\succ_v \mu(v)]$.
This implies that 
$x(E[\succ_v e]) \le x(E[\succ_v \mu(v)])$.
In addition, Claim~\ref{claim_1:lemma:blocking}
implies that 
$x(E[\succ_v \mu(v)]) \le 1/2$.
These imply that $x(E[\succ_v e]) \le 1/2$.
\end{proof} 

Let $v$ be a vertex in $e$. 
Since $\mu \subseteq L$, 
$\mu(v) \in \{d(v), d^-(v)\}$.
If $v \in V^{\ast}$, then 
$d(v) = d^-(v)$. 
If $v \in V \setminus V^{\ast}$, then 
Lemma~\ref{lemma:relation_d_p} implies that 
$d(v) \succ_v d^-(v)$.
Thus, 
since $e \succsim_v \mu(v)$,
$e \succsim_v d^-(v)$. 

Assume that 
$e \succsim_v d(v)$. 
If 
$e \succ_v d(v)$, then 
since 
$x(E[\succsim_v e]) \le x(E[\succ_v d(v)]) < 1/2$, 
Claim~\ref{claim_2:lemma:blocking} implies that 
\begin{equation*}
x(e) + x(E[\succ_v e]) + x(E[\succ_w e])
\le 
x(E[\succsim_v e]) + x(E[\succ_w e]) < 1.
\end{equation*}
However, this contradicts \eqref{eq_2:constraint_super}. 

Assume that $e \sim_v d(v)$.
Then since $e \in E \setminus L$ and $d(v) \in L$,
we have $e \neq d(v)$.
Thus, since $d(v) \in S(v)$, 
it follows from
(S2) of Lemma~\ref{lemma:self_dual_super}
that 
$x(e) = 0$. 
Furthermore, since 
$e \sim_v d(v)$,
we have 
$E[\succ_v e] = E[\succ_v d(v)]$.
Thus, 
since $x(E[\succ_v d(v)]) < 1/2$, 
Claim~\ref{claim_2:lemma:blocking}
implies that 
\begin{equation*}
x(e) + x(E[\succ_v e]) + x(E[\succ_w e])
= 
x(E[\succ_v d(v)]) + x(E[\succ_w e]) < 1.
\end{equation*}
However, this contradicts \eqref{eq_2:constraint_super}.
\end{proof} 

Define $B$ as the set of edges $e \in E \setminus L$ 
such that, 
for every vertex $v \in e$, 
$d(v) \succ_v e \succsim_v d^-(v)$.
Then 
Lemmas~\ref{lemma:no_block} and \ref{lemma:blocking}
imply that, for every matching $\mu$ in $G$ such that 
$\mu \subseteq L$, 
if there exists an edge $e \in E \setminus \mu$ that weakly 
blocks $\mu$, then $e \in B$.

\begin{lemma} \label{lemma:not_m}
For every edge $e \in B$ and 
every edge $f \in M$, 
$e \cap f = \emptyset$. 
\end{lemma}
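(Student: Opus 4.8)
The plan is to proceed by contradiction: assume there exist an edge $e \in B$ and an edge $f \in M$ with $e \cap f \neq \emptyset$, and fix a vertex $v \in e \cap f$. The whole argument then hinges on locating $v$ inside $V^{\ast}$ and reading off a contradiction from the definition of $B$.

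First I would note that, by the property $\bigcup_{e' \in M} e' = V^{\ast}$ of the decomposition chosen above, every vertex incident to an edge of $M$ lies in $V^{\ast}$; since $v \in f \in M$, this gives $v \in V^{\ast}$. For vertices of $V^{\ast}$ we have set $d^-(v) \coloneqq d(v)$, so $d^-(v) = d(v)$. Next I would apply the definition of $B$ at the vertex $v \in e$: since $e \in B$ we have $d(v) \succ_v e \succsim_v d^-(v)$, hence, using $d^-(v) = d(v)$, both $d(v) \succ_v e$ and $e \succsim_v d(v)$. But $d(v) \succ_v e$ means precisely $d(v) \succsim_v e$ and $e \not\succsim_v d(v)$, which contradicts $e \succsim_v d(v)$. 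Therefore no common vertex exists, i.e., $e \cap f = \emptyset$.

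I do not expect a genuine obstacle here; the statement is essentially immediate once one combines the convention $d^-(v) = d(v)$ on $V^{\ast}$ with the definition of $B$, together with the first property of the decomposition. The only point that deserves a moment's care is that it suffices to exhibit one common vertex $v \in e \cap f$ and invoke the defining inequality of $B$ at that single vertex, rather than having to use it at both endpoints of $e$; everything else is routine bookkeeping with the relations $\succ_v$ and $\succsim_v$.
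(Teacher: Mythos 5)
Your proof is correct and follows essentially the same route as the paper: pick $v \in e \cap f$, use $\bigcup_{f' \in M} f' = V^{\ast}$ to place $v$ in $V^{\ast}$ so that $d^-(v) = d(v)$, and then observe that the defining condition $d(v) \succ_v e \succsim_v d^-(v)$ of $B$ becomes self-contradictory. The paper states this more tersely but the argument is identical.
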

\begin{proof}
Assume that there exist edges $e \in B$ and $f \in M$ such that 
$e \cap f \neq \emptyset$. 
Let $v$ be a vertex in $e \cap f$.
Since $f \in M$, $d(v) = d^-(v)$.
However, this contradicts the definition of 
$B$.
\end{proof} 

Let $C = (v_0,v_1,\dots,v_k)$ be a cycle in $G$.
Then for each integer $i \in [k]$ and 
each closed walk 
$C^{\prime} = (w_0,w_1,\dots,w_{\ell})$ in $G$ 
such that $w_1 = v_i$, 
we define the \emph{closed walk $C^+$ in $G$ obtained 
from $C$ by expanding $v_i$ with 
$C^{\prime}$} by
\begin{equation*}
C^+ \coloneqq 
(v_0,v_1,\dots,v_{i-1},w_1,w_2,\dots,w_{\ell},v_i,v_{i+1},\dots,v_k).
\end{equation*}
Furthermore, for each integer $i \in [k]$ and 
each walk $C^{\prime} = (w_1,w_2,\dots,w_{\ell})$ in $G$ such that 
$w_1 = v_i$ and $w_{\ell} = v_{i+1}$, 
we define the \emph{closed walk $C^+$ in $G$ obtained 
from $C$ by replacing $\{v_i,v_{i+1}\}$ with 
$C^{\prime}$} by
\begin{equation*}
C^+ \coloneqq 
(v_0,v_1,\dots,v_{i-1},v_i,w_2,w_3,\dots,w_{\ell-1},v_{i+1},v_{i+2},\dots,v_k).
\end{equation*}

\begin{lemma} \label{lemma:odd_dangerous}
Let $C = (v_0,v_1,\dots,v_k)$ be a cycle in $G$ 
such that $E(C) \subseteq L \cup B$. 
Then $k$ is even. 
\end{lemma}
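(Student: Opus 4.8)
The plan is to argue by contradiction: assume $C = (v_0,v_1,\dots,v_k)$ is a cycle with $E(C) \subseteq L \cup B$ and $k$ odd, and derive a violation of \eqref{eq_3:constraint_super}. The obstacle is that $C$ itself need not be dangerous (edges of $B$ point the "wrong way" at their endpoints, since for $e \in B$ and $v \in e$ we only know $d(v) \succ_v e$, not $e \succsim_v \{$previous edge$\}$), so we cannot feed $C$ directly into \eqref{eq_3:constraint_super}. The idea, following the structure already set up by the two expansion operations just defined, is to surgically replace each $B$-edge of $C$ by a short detour through $L$ so as to produce a genuinely dangerous closed walk $C^+$ with controlled multiplicity, and then evaluate the left-hand side of \eqref{eq_3:constraint_super} on $C^+$.

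First I would classify the edges of $C$. Each edge $\{v_{i-1},v_i\} \in E(C)$ lies in $L$ or in $B$. For an $L$-edge, it equals $d(v_i)$ or $d^-(v_i)$ at each endpoint, and Lemma~\ref{lemma:cycle_L}(i) (or the definition of $L$ via $m$) tells us how it sits relative to the neighbouring $d$/$d^-$ edges. For a $B$-edge $e = \{v_{i-1},v_i\}$, by definition $d(v_{i-1}) \succ_{v_{i-1}} e \succsim_{v_{i-1}} d^-(v_{i-1})$ and similarly at $v_i$. The key structural fact I expect to need is that at every vertex $v$, the edges $d(v)$ and $d^-(v)$ are related by $d(v) \succsim_v d^-(v)$ (equality on $V^\ast$, strict on $V\setminus V^\ast$ by Lemma~\ref{lemma:relation_d_p}), so the whole "interval" $[d^-(v), d(v)]$ of the preference order at $v$ is where all the relevant edges live. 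I would then walk around $C$ and, at each $B$-edge, splice in the detour $v_{i-1} \to p(v_{i-1}) \to \cdots$ or $v_{i-1} \to m(v_{i-1}) \to \cdots$ through $L$-edges (using the replacement-by-a-walk operation) so that at the spliced-in vertices every consecutive pair of edges satisfies the $\succsim$ condition, i.e.\ the resulting $C^+$ is dangerous. Because $L$ has bounded structure ($|L(v)| \le 2$ at each vertex, $\mathcal{C}$ and $M$ decompose it) each detour has bounded length and each edge is used a bounded number of times, giving ${\sf mult}(C^+) \le 4|E|$, hence $C^+ \in \mathcal{D}$.

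The final computation is to show that the left-hand side of \eqref{eq_3:constraint_super} evaluated on $C^+$ exceeds $\lfloor \text{length}(C^+)/2 \rfloor$, contradicting \eqref{eq_3:constraint_super}. The point is that along the original $C$, for each index $i$ the term $x(E[e_i \succsim_{v_i}]\setminus\{e_{i+1}\})$ equals exactly $1/2$: when $e_i = d^-(v_i)$ this is the computation in the proof of Lemma~\ref{lemma:cycle_L}(ii) (using \eqref{eq_1:constraint_super} and Lemma~\ref{lemma:cycle_2}), and when $e_i \in B$ a parallel computation using Claim~\ref{claim_2:lemma:blocking} and (S1)/(S2) of Lemma~\ref{lemma:self_dual_super} pins it to $1/2$ as well; the spliced-in $L$-edges also each contribute exactly $1/2$. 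So the sum over the $k' = \text{length}(C^+)$ indices is $k'/2$, and if $k'$ is odd this already beats $\lfloor k'/2\rfloor$. Thus I must arrange the detours to preserve the parity of the number of edges — each detour through a cycle of $\mathcal{C}$ replaces one $B$-edge by an odd-length $L$-path (the cycles of $\mathcal{C}$ are even by Lemma~\ref{lemma:cycle_L}(ii), so a "detour around the cycle minus one edge" has odd length, net parity change zero after accounting for the removed edge), and a detour through an $M$-edge is excluded because Lemma~\ref{lemma:not_m} says $B$-edges avoid $M$-vertices entirely. I expect the main obstacle to be exactly this bookkeeping: choosing the detours so that (a) $C^+$ is genuinely dangerous at every vertex, (b) ${\sf mult}(C^+) \le 4|E|$, and (c) the length of $C^+$ stays odd; once those are in hand the contradiction with \eqref{eq_3:constraint_super} is immediate.
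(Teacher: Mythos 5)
Your high-level strategy---assume $k$ is odd, surgically modify $C$ into an odd dangerous closed walk belonging to $\mathcal{D}$ on which every term of \eqref{eq_3:constraint_super} is at least $1/2$, and contradict \eqref{eq_3:constraint_super}---is exactly the paper's. But the central construction, as you describe it, would fail. You propose to \emph{replace} each $B$-edge by a detour through $L$ (``each detour through a cycle of $\mathcal{C}$ replaces one $B$-edge by an odd-length $L$-path''). A $B$-edge is never an edge of a cycle in $\mathcal{C}$, and its two endpoints, while both in $V \setminus V^{\ast}$, may lie on \emph{different} cycles of $\mathcal{C}$, in which case no $L$-path joins them at all; and even when they lie on the same cycle, the resulting splice is neither dangerous nor parity-correct. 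The $B$-edge must remain in the walk. The paper's move is different: it keeps each $B$-edge $\{v_{i-1},v_i\}$ and \emph{expands the vertex} $v_i$ with a full traversal of the cycle of $\mathcal{C}$ through $v_i$ (inserting an even number of edges, so parity is preserved). The effect is that the $B$-edge is immediately followed at $v_i$ by $d(v_i) \succ_{v_i} \{v_{i-1},v_i\}$, and the edge re-entering $v_i$ after the inserted tour is $d^-(v_i)$, which every edge of $L \cup B$ at $v_i$ weakly dominates; this is what makes the walk dangerous around a $B$-edge.

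A second omission: you identify $B$-edges as the only obstruction to $C$ being dangerous, but $L$-edges traversed against the $m$-orientation also break it. If $\{v_{i-1},v_i\} \in L$ with $v_i = p(v_{i-1})$, the edge entering $v_i$ is $d(v_i)$, and no available outgoing edge at $v_i$ satisfies $\succsim_{v_i} d(v_i)$ (the outgoing edge is either $d^-(v_i)$ or a $B$-edge, both strictly below $d(v_i)$). The paper replaces such an edge by the complementary path around its cycle of $\mathcal{C}$, again an even change in length. Finally, your assertion that the $B$-edge terms equal exactly $1/2$ is both more than you need and more than you can easily get: the inequality $x(E[g_i \succsim_{c_i}] \setminus \{g_{i+1}\}) \ge 1/2$ suffices, and even that requires the separate argument (via (S2) of Lemma~\ref{lemma:self_dual_super}) that the deleted edge $g_{i+1}$ contributes nothing when $g_{i+1} \sim_{c_i} d^-(c_i)$. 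As written, the proposal acknowledges but does not close these gaps, and the specific construction it sketches is not the one that works.
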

\begin{proof}
Assume that $k$ is odd, and we derive a contradiction. 
Since $M$ is a set of pairwise disjoint edges in $L$, 
Lemma~\ref{lemma:not_m} implies that 
$V(C) \cap (\bigcup_{e \in M}e) = \emptyset$. 
First, we construct the closed walk $C^{\ast}$ 
in $G$ by using Algorithm~\ref{alg:odd_cycle}. 

\begin{algorithm}[ht]
Set $C^{\ast}_0 \coloneqq C$.\\
\For{$i = 1,2,\dots,k$}
{
  \uIf{$v_{i} = p(v_{i-1})$}
  {
    Let $C_i = (w_{i,0},w_{i,1},\dots,w_{i,\ell_i})$ be the cycle in $\mathcal{C}$ such that 
    $w_{i,1} = v_{i-1}$ and $w_{i,\ell_i} = v_{i}$.
    Define the walk $C_i^{\prime}$ in $G$ as $(w_{i,1},w_{i,2},\dots,w_{i,\ell_i})$.\\
    Define $C^{\ast}_i$ as the closed walk in $G$ obtained from $C^{\ast}_{i-1}$ by 
    replacing $\{v_{i-1},v_i\}$ with $C_i^{\prime}$.
  }
  \uElseIf{$\{v_{i-1},v_i\} \in B$}
  {
    Let $C_i = (w_{i,0},w_{i,1},,\dots,w_{i,\ell_i})$ be the cycle in $\mathcal{C}$ such that 
    $w_{i,1} = v_{i}$.\\
    Define $C^{\ast}_i$ as the closed walk in $G$ obtained from $C^{\ast}_{i-1}$ by 
    expanding $v_i$ with $C_i$.
  }
  \Else
  {
    Define $C^{\ast}_i \coloneqq C^{\ast}_{i-1}$. 
  }
}
Output $C^{\ast}_k$ as $C^{\ast}$, and halt. 
\caption{Algorithm for constructing a dangerous walk}
\label{alg:odd_cycle}
\end{algorithm}

Assume that 
$C^{\ast} = (c_0,c_1,c_2,\dots,c_q)$. 
For each integer $i \in [q+1]$, 
we define $g_i \coloneqq \{c_{i-1},c_i\}$. 
Then the following statements hold.
\begin{itemize}
\item
Steps~6 to 8 of Algorithm~\ref{alg:odd_cycle}
imply that, for every integer $i \in [q]$, 
$\{g_i,g_{i+1}\} \cap L \neq \emptyset$.
\item
For every integer $i \in [q]$, 
if $g_i \in L$ (resp.\ $g_{i+1} \in L$), then 
$g_i = d^-(c_i)$
(resp.\ $g_{i+1} = d(c_i)$). 
\item
For every integer $i \in [q]$, 
$c_i \in V \setminus V^{\ast}$.
\end{itemize}

\begin{claim} \label{claim_1:lemma:odd_dangerous}
$q$ is odd. 
\end{claim}
\begin{proof}
Lemma~\ref{lemma:cycle_L}(ii)
implies that $\ell_i$ is even
for every integer $i \in [k]$.
Thus, since $k$ is odd, this claim holds. 
\end{proof} 

\begin{claim} \label{claim_6:lemma:odd_dangerous}
For every integer $i \in [q]$, 
$g_i \succsim_{c_i} d^-(c_i)$
and 
$g_{i+1} \succsim_{c_i} d^-(c_i)$
\end{claim}
\begin{proof}
Let $i$ be an integer in $[q]$. 
If $g_i \in L$, then 
$g_i = d^-(c_i)$. 
If $g_i \in B$, then
since $c_i \in g_i$, 
the definition of $B$
implies that 
$g_i \succsim_{c_i} d^-(c_i)$. 
If $g_{i+1} \in L$, then 
$g_{i+1} = d(c_i)$. 
Thus, since 
Lemma~\ref{lemma:relation_d_p}
implies that 
$d(c_i) \succ_{c_i} d^-(c_i)$, 
$g_{i+1} \succ_{c_i} d^-(c_i)$.
If $g_{i+1} \in B$, then
since $c_i \in g_{i+1}$, 
the definition of $B$
implies that 
$g_{i+1} \succsim_{c_i} d^-(c_i)$. 
\end{proof} 

\begin{claim} \label{claim_7:lemma_odd_dangerous}
For every integer $i \in [q]$, 
if $g_{i+1} \in B$ and $g_{i+1} \sim_{c_i} d^-(c_i)$, then 
$x(g_{i+1}) = 0$. 
\end{claim}
\begin{proof}
Let $i$ be an integer in $[q]$. 
Since $d^-(c_i) \in L$ and $g_{i+1} \notin L$, 
we have
$d^-(c_i) \neq g_{i+1}$.
Since $d^-(c_i) \in S(c_i)$,
(S2) of 
Lemma~\ref{lemma:self_dual_super}
implies that $x(g_{i+1}) = 0$. 
This completes the proof. 
\end{proof} 

\begin{claim} \label{claim_4:lemma_odd_dangerous}
For every integer $i \in [q]$, 
$E[d^-(c_i) \succsim_{c_i}] \setminus \{g_{i+1}\}
\subseteq 
E[g_i \succsim_{c_i}] \setminus \{g_{i+1}\}$. 
\end{claim}
\begin{proof}
For every integer $i \in [q]$,
Claim~\ref{claim_6:lemma:odd_dangerous}
implies that 
$E[d^-(c_i) \succsim_{c_i}] 
\subseteq 
E[g_i \succsim_{c_i}]$. 
\end{proof} 

\begin{claim} \label{claim_5:lemma_odd_dangerous}
For every integer $i \in [q]$, 
$x(E[d^-(c_i) \succsim_{c_i}])
= 
x(E[d^-(c_i) \succsim_{c_i}] \setminus \{g_{i+1}\})$.
\end{claim}
\begin{proof}
Let $i$ be an integer in $[q]$. 
Then Claim~\ref{claim_6:lemma:odd_dangerous}
implies that 
$g_{i+1} \succsim_{c_i} d^-(c_i)$. 
If $g_{i+1} \succ_{c_i} d^-(c_i)$,
then $g_{i+1} \notin E[d^-(c_i) \succsim_{c_i}]$.
Thus, we assume that 
$g_{i+1} \sim_{c_i} d^-(c_i)$.
In this case, if $g_{i+1} \in L$, then 
Lemma~\ref{lemma:relation_d_p}
implies that 
$g_{i+1} = d(c_i) \succ_{c_i} d^-(c_i)$. 
This is a contradiction.
Thus, 
$g_{i+1} \notin L$, i.e.,
$g_{i+1} \in B$.
In this case, 
Claim~\ref{claim_7:lemma_odd_dangerous}
implies that $x(g_{i+1}) = 0$.
This completes the proof.
\end{proof} 

\begin{claim} \label{claim_2:lemma:odd_dangerous}
For every integer $i \in [q]$, 
$x(E[g_i \succsim_{c_i}] \setminus \{g_{i+1}\}) \ge 1/2$.
\end{claim}
\begin{proof}
Let $i$ be an integer in $[q]$. 
Recall that 
$c_{i} \in V \setminus V^{\ast}$. 
Thus, 
\begin{equation*}
\begin{split}
x(E[g_i \succsim_{c_i}] \setminus \{g_{i+1}\}) 
& \ge 
x(E[d^-(c_i) \succsim_{c_i}] \setminus \{g_{i+1}\})\\ 
&=
x(E[d^-(c_i) \succsim_{c_i}]) \\
&= 
1 - x(E[\succ_{c_i} d^-(c_i)]) 
= 1/2, 
\end{split}
\end{equation*}
where 
the inequality follows from 
Claim~\ref{claim_4:lemma_odd_dangerous}, 
the first equation follows from 
Claim~\ref{claim_5:lemma_odd_dangerous}, 
the second equation follows from 
\eqref{eq_1:constraint_super}, and 
the third equation follows from 
Lemma~\ref{lemma:cycle_2}. 
\end{proof} 

\begin{claim} \label{claim_3:lemma:odd_dangerous}
$C^{\ast} \in \mathcal{D}$.
\end{claim}
\begin{proof}
Let $i$ be an integer in $[q]$. 
Recall that $\{g_i,g_{i+1}\} \cap L \neq \emptyset$. 
If $g_i,g_{i+1} \in L$, then 
$g_i = d^-(c_i)$ and 
$g_{i+1} = d(c_i)$. 
Thus, in this case, 
Lemma~\ref{lemma:relation_d_p} implies that 
$g_{i+1} \succ_{c_i} g_i$. 
If $g_i \in L$ and $g_{i+1} \in B$, then 
since $g_i = d^-(c_i)$, 
the definition of $B$ implies that 
$g_{i+1} \succsim_{c_i} g_i$. 
Assume that $g_i \in B$ and 
$g_{i+1} \in L$. 
Then $g_{i+1} = d(c_i)$.
Thus, 
the definition of $B$ implies that 
$g_{i+1} \succ_{c_i} g_i$. 

Since $C$ is a cycle in $G$, 
$k \le |V|$. 
Thus, since 
$C_i^{\prime}$ is a cycle in $G$ 
for every integer $i \in [k]$,
we have ${\sf mult}(C^{\ast}) \le 2|V| \le 4|E|$. 
This implies that 
$C^{\ast} \in \mathcal{D}$.
This completes the proof. 
\end{proof} 

Claims~\ref{claim_1:lemma:odd_dangerous}
and \ref{claim_2:lemma:odd_dangerous} imply that 
\begin{equation*}
\sum_{i \in [q]}x(E[g_i \succsim_{c_i}] \setminus \{g_{i+1}\}) \ge q/2 > \lfloor q/2 \rfloor. 
\end{equation*}
Claim 
\ref{claim_3:lemma:odd_dangerous} implies that 
this contradicts \eqref{eq_3:constraint_super}. 
This completes the proof. 
\end{proof} 

We now ready to prove Theorem~\ref{main_thm_super}. 
Define the subgraph $H$ of $G$ by 
$H \coloneqq (V,L \cup B)$.  
Then Lemma~\ref{lemma:odd_dangerous}
implies that 
$H$ is bipartite. 
Thus, 
there exists a partition $\{P,Q\}$ of $V$ 
such that, for every edge $e \in L \cup B$, 
$|e \cap P| = |e \cap Q| = 1$. 

\begin{lemma} \label{lemma:bipartite}
Let $(v_0,v_1,\dots,v_k)$ be a cycle in $\mathcal{C}$.
Then one of the following statements holds.
\begin{itemize}
\item 
$\{v_1,v_3,\dots,v_{k-1}\} \subseteq P$ 
and 
$\{v_2,v_4,\dots,v_k\} \subseteq Q$.
\item 
$\{v_1,v_3,\dots,v_{k-1}\} \subseteq Q$
and 
$\{v_2,v_4,\dots,v_k\} \subseteq P$.
\end{itemize}
\end{lemma}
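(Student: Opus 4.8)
\textbf{Proof proposal for Lemma~\ref{lemma:bipartite}.}

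The plan is to show that the cycle $C = (v_0,v_1,\dots,v_k)$ lies entirely inside $L$, so that its vertices alternate between $P$ and $Q$ by the defining property of the bipartition $\{P,Q\}$ of $H = (V, L\cup B)$. The key observation is that for every integer $i \in [k]$ we have $m(v_i) = v_{i+1}$ (the fourth bulleted property of $\mathcal{C}$), and hence $\{v_i,v_{i+1}\} = d(v_i)$. By the definition of $L$, an edge $\{v,w\}$ belongs to $L$ precisely when $m(v) = w$ or $m(w) = v$; since $m(v_i) = v_{i+1}$, each edge $\{v_{i-1},v_i\} = \{v_i, m(v_i)\}$ of $E(C)$ therefore belongs to $L$. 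Thus $E(C) \subseteq L \subseteq L\cup B$, so every edge of $C$ is an edge of $H$.

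Once $E(C) \subseteq L$ is established, the rest is immediate: since $\{P,Q\}$ is a partition of $V$ with $|e\cap P| = |e\cap Q| = 1$ for every $e \in L\cup B$, walking around the cycle $C$ forces the colors to alternate. Concretely, $v_1$ lies in exactly one of $P$, $Q$; say $v_1 \in P$. Then from $\{v_1,v_2\}\in L$ we get $v_2 \in Q$, from $\{v_2,v_3\}\in L$ we get $v_3\in P$, and so on. Since Lemma~\ref{lemma:cycle_L}(ii) guarantees that $k$ is even, this alternation is globally consistent (the edge $\{v_k, v_1\} = \{v_0,v_1\}$ joins $v_k \in Q$ to $v_1\in P$ without contradiction), and we obtain $\{v_1,v_3,\dots,v_{k-1}\}\subseteq P$ and $\{v_2,v_4,\dots,v_k\}\subseteq Q$. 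The symmetric case $v_1\in Q$ gives the other alternative in the statement.

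I do not expect any real obstacle here: the lemma is essentially a bookkeeping consequence of (a) the fact that the edges of a cycle in $\mathcal{C}$ are $m$-edges and hence lie in $L$, (b) the parity of $k$ from Lemma~\ref{lemma:cycle_L}(ii), and (c) the bipartiteness of $H$ already secured via Lemma~\ref{lemma:odd_dangerous}. The only point requiring a sentence of care is making explicit that each cycle edge $\{v_{i-1},v_i\}$ is of the form $\{v_i, m(v_i)\}$ and is therefore in $L$; after that the alternation argument is routine and the even length of $C$ closes the cycle consistently.
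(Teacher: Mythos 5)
Your proof is correct and follows the same route as the paper, whose entire proof is the one-line observation that $\{v_{i-1},v_i\} \in L$ for every $i \in [k]$; you simply spell out the alternation argument that the paper leaves implicit. (Minor index slip: the edge $\{v_{i-1},v_i\}$ is $\{v_{i-1}, m(v_{i-1})\}$, not $\{v_i, m(v_i)\}$, but since both expressions enumerate the same edge set $E(C)$ as $i$ ranges over $[k]$, nothing is affected.)
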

\begin{proof}
This lemma follows from the fact that 
$\{v_{i-1},v_i\} \in L$ for every integer $i \in [k]$. 
\end{proof} 

Define  
$\mu \coloneqq \{d(v) \mid v \in P\}$.
Then Lemma~\ref{lemma:bipartite}
implies that 
$\mu$ is a matching in $G$.
Thus, what remains is to prove that $\mu$ is super-stable. 
For every edge $e \in B$, 
since $|e \cap P|=1$, 
there exists a vertex $v \in e$ such that 
$\mu(v) = d(v)$. 
Thus, 
since the definition of $B$ implies that 
$d(v) \succ_v e$, 
$\mu$ is super-stable. 
This completes the proof of Theorem~\ref{main_thm_super}.

\section{Polynomial-Time Solvability} 
\label{section:separation}

Here we prove that the separation problem for ${\bf P}$
can be solved in polynomial time. 
The proof follows the proof in 
\cite{TeoS98} for preferences without ties.
The remarkable difference is that we need 
(S2) of
Lemma~\ref{lemma:self_dual_super} to prove the non-negativity 
of the cost function on an auxiliary directed graph
(see Lemma~\ref{lemma:non_negative_cost}).  

Let $x$ be a vector in $\mathbb{Q}_+^E$. 
It is not difficult to see that 
we can determine whether $x$ satisfies \eqref{eq_1:constraint_super} and 
\eqref{eq_2:constraint_super} in polynomial time.
Assume that 
$x$ satisfies \eqref{eq_1:constraint_super} and 
\eqref{eq_2:constraint_super}. 
In what follows, under this assumption, 
we consider the problem 
of determining whether $x$ satisfies \eqref{eq_3:constraint_super}.

Define the auxiliary directed graph 
$D = (N,K)$ as follows. 
Define the vertex set $N$ of $D$ by
\begin{equation*}
N \coloneqq 
\{\langle v,w,s \rangle, \langle w,v,s \rangle \mid \{v,w\} \in E, 
s \in V \setminus \{v,w\}\}
\cup 
\{\langle v,w,v \rangle, \langle w,v,w \rangle \mid \{v,w\} \in E\}. 
\end{equation*}
Then for each pair of distinct vertices $\langle v,w,s \rangle, \langle p,q,r \rangle \in N$, 
there exists an arc 
$(\langle v,w,s \rangle, \langle p,q,r \rangle)$ from
$\langle v,w,s \rangle$ to 
$\langle p,q,r \rangle$ in $K$ if and only if 
$w = p$, $s = q$, and $\{p,q\} \succsim_w \{v,w\}$. 
For each arc $(\langle v,w,s \rangle, \langle w,s,r \rangle) \in K$, 
we define the cost ${\sf cost}(\langle v,w,s \rangle, \langle w,s,r \rangle)$ 
by
\begin{equation*}
{\sf cost}(\langle v,w,s \rangle, \langle w,s,r \rangle) \coloneqq 
1 - x(E[\{v,w\} \succsim_w] \setminus \{w,s\}) - x(E[\{w,s\} \succsim_s] \setminus \{s,r\}).
\end{equation*}

\begin{lemma} \label{lemma:non_negative_cost}
Let $(\langle v,w,s \rangle, \langle w,s,r \rangle)$ be an arc in $K$.
Then 
${\sf cost}(\langle v,w,s \rangle, \langle w,s,r \rangle) \ge 0$. 
\end{lemma}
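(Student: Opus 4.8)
The goal is to show that the arc cost
\[
{\sf cost}(\langle v,w,s \rangle, \langle w,s,r \rangle) =
1 - x(E[\{v,w\} \succsim_w] \setminus \{w,s\}) - x(E[\{w,s\} \succsim_s] \setminus \{s,r\})
\]
is non-negative. Write $e := \{v,w\}$, $f := \{w,s\}$, $h := \{s,r\}$; note $e,f,h$ are genuine edges and $f$ is incident to both $w$ and $s$. The quantity to bound is $x(E[e \succsim_w f] \setminus \{f\})$-type mass at $w$ plus $x(E[f \succsim_s h] \setminus \{h\})$-type mass at $s$; I will show their sum is at most $1$. The natural tool is constraint~\eqref{eq_2:constraint_super} applied to the edge $f$, which reads $x(f) + x(E[\succ_w f]) + x(E[\succ_s f]) \ge 1$. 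Rewriting via \eqref{eq_1:constraint_super}, this says that the "weakly-worse-than-$f$" mass at $w$ together with the "weakly-worse-than-$f$" mass at $s$ is at most $1$, more precisely $x(E[f \succsim_w] \setminus \{f\}) + x(E[f \succsim_s] \setminus \{f\}) \le 1$ (using $x(E[f\succsim_w]\setminus\{f\}) = 1 - x(f) - x(E[\succ_w f])$ and similarly at $s$, then subtracting from $2$).

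\textbf{Key steps, in order.} First, I would establish the identity, for any vertex $u$ and edge $g \in E(u)$,
\[
x(E[g \succsim_u] \setminus \{g\}) = 1 - x(g) - x(E[\succ_u g]),
\]
which is immediate from \eqref{eq_1:constraint_super} and the partition $E(u) = E[\succ_u g] \sqcup \{g\} \sqcup (E[g\succsim_u]\setminus\{g\})$ (here I use that by transitivity and totality, $E[g \sim_u]$ is the overlap, and $E[g\succsim_u]\setminus\{g\}$ already absorbs the $\sim$ part). Second, apply \eqref{eq_2:constraint_super} to $f$ and combine with the identity at $w$ and at $s$ to get
\[
x(E[f \succsim_w] \setminus \{f\}) + x(E[f \succsim_s] \setminus \{f\}) \le 1.
\]
Third — and this is where (S2) of Lemma~\ref{lemma:self_dual_super} enters — I must relate the sets appearing in the cost, namely $E[e \succsim_w] \setminus \{f\}$ and $E[f \succsim_s] \setminus \{h\}$, to $E[f \succsim_w]\setminus\{f\}$ and $E[f\succsim_s]\setminus\{f\}$. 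Since the arc exists we have $\{w,s\} = f \succsim_w \{v,w\} = e$, hence $E[e \succsim_w] \subseteq E[f \succsim_w]$; removing $f$ from the larger set and removing $\{w,s\}=f$ from the smaller gives $x(E[e \succsim_w] \setminus \{f\}) \le x(E[f \succsim_w] \setminus \{f\})$ — wait, we need the reverse inclusion direction carefully: $E[e\succsim_w] \subseteq E[f \succsim_w]$ means the $e$-side mass is \emph{smaller}, which is the wrong direction. So instead I use $x(E[e \succsim_w]\setminus\{w,s\})$: since $f = \{w,s\} \in E[f\succsim_w]$ but the cost term excludes exactly $\{w,s\}$, and since $E[e\succsim_w] \setminus \{f\}$ could still be larger than $E[f\succsim_w]\setminus\{f\}$ only by edges $g$ with $e \succsim_w g$ but $f \not\succsim_w g$, i.e. $g \succ_w f \succsim_w e$ — impossible. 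So in fact $E[e\succsim_w]\setminus\{f\} \subseteq E[f\succsim_w]\setminus\{f\}$ after all, giving $x(E[e\succsim_w]\setminus\{w,s\}) \le x(E[f\succsim_w]\setminus\{f\})$. Similarly at $s$: the arc into $\langle w,s,r\rangle$ guarantees nothing directly about $f$ versus $h$, but the cost term is $x(E[f \succsim_s]\setminus\{h\})$, and I need $x(E[f\succsim_s]\setminus\{h\}) \le x(E[f\succsim_s]\setminus\{f\}) + (\text{correction})$. The issue is that $\{h\} \ne \{f\}$ in general, so removing $h$ instead of $f$ may leave $f$ itself in the set. This is exactly where (S2) is needed: if $f \in E[f \succsim_s]\setminus\{h\}$ contributes positive mass $x(f) > 0$, then... actually the cleanest route is to note $x(E[f\succsim_s]\setminus\{h\}) \le x(E[f\succsim_s]\setminus\{f\}) + x(f) - x(h)$ is not obviously signed; instead I should directly bound $x(E[f\succsim_s]\setminus\{h\}) \le x(E[f\succsim_s])$ and handle the $f$-term via (S2) if $h \succsim_s f$.

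\textbf{Main obstacle.} The crux is the asymmetry in the two excluded edges: at $w$ the cost excludes $\{w,s\} = f$ itself, matching \eqref{eq_2:constraint_super} cleanly, but at $s$ the cost excludes $\{s,r\} = h$, a \emph{different} edge from $f$. The resolution is: since $f \in E(s)$ with $x(f)$ possibly positive, and since removing $h$ rather than $f$ from $E[f\succsim_s]$ can only add back the term $x(f)$ (when $f \ne h$ and $f \in E[f\succsim_s]$, which always holds), I get
\[
x(E[f\succsim_s]\setminus\{h\}) \le x(E[f\succsim_s]\setminus\{f\}) + x(f).
\]
Then I would need $x(f) \le $ something I've already given away, which forces me to instead keep the $x(f)$ term from \eqref{eq_2:constraint_super}: that constraint gives $x(f) + x(E[\succ_w f]) + x(E[\succ_s f]) \ge 1$, so $x(E[f\succsim_w]\setminus\{f\}) + x(E[f\succsim_s]\setminus\{f\}) = 2 - x(E[\succ_w f]) - x(f) - x(E[\succ_s f]) - x(f) \le 2 - (1) - x(f) = 1 - x(f)$. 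Hence $x(E[e\succsim_w]\setminus\{w,s\}) + x(E[f\succsim_s]\setminus\{s,r\}) \le x(E[f\succsim_w]\setminus\{f\}) + x(E[f\succsim_s]\setminus\{f\}) + x(f) \le 1$, which is precisely ${\sf cost} \ge 0$. The subtle point I expect to spend the most care on is verifying $E[e\succsim_w]\setminus\{w,s\} \subseteq E[f\succsim_w]\setminus\{f\}$ via transitivity (ruling out $g \succ_w f \succsim_w g$), and confirming that the double-counting of $x(f)$ in \eqref{eq_2:constraint_super} is exactly what pays for the mismatch between excluding $h$ and excluding $f$ at vertex $s$; (S2) of Lemma~\ref{lemma:self_dual_super} may additionally be invoked to force $x(f) = 0$ in the boundary case $h \sim_s f$ with $h \ne f$, tightening the argument where needed.
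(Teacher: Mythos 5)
Your proof is correct, and it takes a genuinely different---and in fact lighter---route than the paper's. Writing $e \coloneqq \{v,w\}$, $f \coloneqq \{w,s\}$, $g \coloneqq \{s,r\}$, the paper splits into the cases $x(f)=0$ and $x(f)>0$ and, in the latter, invokes (S2) of Lemma~\ref{lemma:self_dual_super} to annihilate the indifference-class masses $x(E[f \sim_w] \setminus \{f\})$ and $x(E[f \sim_s] \setminus \{f\})$; the paper even flags this use of (S2) as the key new ingredient of Section~\ref{section:separation}. Your argument needs neither the case split nor (S2), because you account for $x(f)$ exactly rather than discarding the set differences early: from $f \succsim_w e$ and transitivity you get $x(E[e \succsim_w] \setminus \{f\}) \le x(E[f \succsim_w] \setminus \{f\})$, from $E[f \succsim_s] \setminus \{g\} \subseteq E[f \succsim_s]$ and reflexivity you get $x(E[f \succsim_s] \setminus \{g\}) \le x(E[f \succsim_s] \setminus \{f\}) + x(f)$, and the identity $x(E[f \succsim_u] \setminus \{f\}) = 1 - x(f) - x(E[\succ_u f])$ for $u \in \{w,s\}$ combined with \eqref{eq_2:constraint_super} for $f$ gives
\begin{equation*}
x(E[f \succsim_w] \setminus \{f\}) + x(E[f \succsim_s] \setminus \{f\})
= 2 - \bigl(x(f) + x(E[\succ_w f]) + x(E[\succ_s f])\bigr) - x(f) \le 1 - x(f),
\end{equation*}
so the surplus $+x(f)$ from the second bound is cancelled exactly and the two cost terms sum to at most $1$. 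The paper instead bounds $x(E[e\succsim_w]\setminus\{f\}) + x(E[f\succsim_s]\setminus\{g\})$ by $x(E[e\succsim_w]) + x(E[f\succsim_s]) \le 1 + x(f)$ and must recover the lost $x(f)$ via (S2) when $x(f)>0$. What your approach buys is a case-free proof using only \eqref{eq_1:constraint_super}, \eqref{eq_2:constraint_super}, non-negativity, and transitivity/totality of the preferences---showing in particular that (S2) is not actually required for this lemma. The only caveat is presentational: your write-up contains several retracted false starts, and the complete argument is the chain in your final paragraph; that is the part to keep.
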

\begin{proof}
Define $e \coloneqq \{v,w\}$, 
$f \coloneqq \{w,s\}$, and $g \coloneqq \{s,r\}$.
Recall that 
$x$ satisfies \eqref{eq_1:constraint_super} and 
\eqref{eq_2:constraint_super}.
Thus, 
\eqref{eq_2:constraint_super} for $f$
implies that 
\begin{equation} \label{eq_1:lemma:non_negative_cost}
\begin{split}
0 
& \le x(f) + x(E[\succ_w f])
+
x(E[\succ_s f]) - 1\\
& =
x(f) + (1 - x(E[f \succsim_w]))
+
(1 - x(E[f \succsim_s])) - 1\\
& =
1 - x(E[f \succsim_w])
- x(E[f \succsim_s]) 
+ x(f),
\end{split}
\end{equation}
where
the first equation follows from 
\eqref{eq_1:constraint_super}.
If $x(f) = 0$, then 
\eqref{eq_1:lemma:non_negative_cost} 
implies that 
\begin{equation*} 
\begin{split}
0 
& \le 
1 - x(E[f \succsim_w])
- x(E[f \succsim_s]) \\
& \le
1 - x(E[e \succsim_w])
- x(E[f \succsim_s]) \\
& \le
1 - x(E[e \succsim_w] \setminus \{f\})
- x(E[f \succsim_s] \setminus \{g\}) 
= {\sf cost}(\langle v,w,s \rangle, \langle w,s,r \rangle),  
\end{split} 
\end{equation*}
where
the second 
inequality follows from $f \succsim_w e$.  
Thus, we assume that $x(f) > 0$. 
In this case, 
(S2) of 
Lemma~\ref{lemma:self_dual_super}
implies that 
$x(E[f \sim_w] \setminus \{f\}) = x(E[f \sim_s] \setminus \{f\}) = 0$. 
Thus, 
\eqref{eq_1:lemma:non_negative_cost} implies that 
\begin{equation} \label{eq_2:lemma:non_negative_cost}
\begin{split}
0 
& \le
1 - x(E[f \succsim_w])
- x(E[f \succsim_s]) 
+ x(f)\\
& =
1 - x(E[f \succ_w]) - x(f)
- x(E[f \succ_s]) - x(f) 
+ x(f)\\
& =
1 - x(E[f \succ_w]) 
- x(E[f \succ_s]) 
- x(f).
\end{split}
\end{equation}
If $f \succ_w e$, then 
\begin{equation*}
\begin{split}
{\sf cost}(\langle v,w,s \rangle, \langle w,s,r \rangle)
& =
1 - x(E[e \succsim_w] \setminus \{f\})
- x(E[f \succsim_s] \setminus \{g\})\\ 
& \ge 
1 - x(E[e \succsim_w]) - x(E[f \succsim_s])\\
& \ge 
1 - x(E[f \succ_w]) - x(E[f \succ_s]) - x(f).
\end{split}
\end{equation*}
If $f \sim_w e$, then 
(S2) of
Lemma~\ref{lemma:self_dual_super}
implies that 
$x(E[e \sim_w] \setminus \{f\}) = 0$.
Thus, since $f \sim_w e$,
\begin{equation*}
\begin{split}
{\sf cost}(\langle v,w,s \rangle, \langle w,s,r \rangle)
& =
1 - x(E[e \succsim_w] \setminus \{f\})
- x(E[f \succsim_s] \setminus \{g\})\\ 
& = 
1 - x(E[e \succ_w]) - x(E[f \succsim_s] \setminus \{g\})\\
& \ge 
1 - x(E[e \succ_w]) - x(E[f \succsim_s])\\
& = 
1 - x(E[e \succ_w]) - x(E[f \succ_s]) - x(f)\\
& = 
1 - x(E[f \succ_w]) - x(E[f \succ_s]) - x(f).
\end{split}
\end{equation*}
Thus, in both cases, 
\eqref{eq_2:lemma:non_negative_cost}
implies that 
${\sf cost}(\langle v,w,s \rangle, \langle w,s,r \rangle) \ge 0$.
This completes the proof. 
\end{proof} 

For a positive integer $k$, 
a sequence $(a_0,a_1,\dots,a_k)$ 
of vertices in $N$ is called a \emph{directed walk in $D$} if 
we have 
$a_{i-1} \neq a_i$ and $(a_{i-1},a_i) \in K$ 
for every integer $i \in [k]$. 
Let 
$W = (a_0,a_1,\dots,a_k)$ be a directed walk in $D$. 
If $a_0 = a_k$, then $W$ is called 
a \emph{closed directed walk in $D$}. 
Furthermore, if $W$ is a closed directed walk in $D$ and 
$a_i \neq a_j$ holds 
for every pair of distinct integers $i,j \in [k]$, then 
$W$ is called a \emph{directed cycle in $D$}.
If $k$ is odd, then 
$W$ is called 
an \emph{odd closed directed walk in $D$}
or 
an \emph{odd directed cycle in $D$}. 
If $W$ is a closed directed walk in $D$,
then  
we define the cost ${\sf cost}(W)$ of $W$ by 
${\sf cost}(W) \coloneqq \sum_{i=1}^k{\sf cost}(a_{i-1},a_i)$.

\begin{lemma} \label{lemma:separation_equivalence}
$x$ satisfies \eqref{eq_3:constraint_super} 
if and only if 
${\sf cost}(W) \ge 1$
for every odd directed cycle $W$ in $D$. 
\end{lemma}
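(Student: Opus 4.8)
The plan is to establish a correspondence between the constraint \eqref{eq_3:constraint_super}, which ranges over dangerous closed walks $C \in \mathcal{D}$, and the cost condition over odd directed cycles in $D$. The key observation is that a dangerous closed walk $(v_0,v_1,\dots,v_k)$ together with its associated edges $e_i = \{v_{i-1},v_i\}$ can be read off directly from the vertex set $N$ of $D$: the triple $\langle v_{i-1}, v_i, v_{i+1}\rangle$ encodes ``arriving at $v_i$ along $e_i$ and leaving along $e_{i+1}$.'' The defining condition of an arc, $\{p,q\} \succsim_w \{v,w\}$ with $w=p$, $s=q$, is exactly the dangerous-walk condition $\{v_{i+1},v_i\} \succsim_{v_i} \{v_i,v_{i-1}\}$. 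Moreover, the cost of the arc $(\langle v_{i-1},v_i,v_{i+1}\rangle, \langle v_i,v_{i+1},v_{i+2}\rangle)$ is $1 - x(E[e_i \succsim_{v_i}] \setminus \{e_{i+1}\}) - x(E[e_{i+1} \succsim_{v_{i+1}}] \setminus \{e_{i+2}\})$, so summing the costs around a closed directed walk of length $k$ telescopes to $k - 2\sum_{i \in [k]} x(E[e_i \succsim_{v_i}] \setminus \{e_{i+1}\})$. Thus for an odd $k$, ${\sf cost}(W) \ge 1$ is equivalent to $2\sum_{i \in [k]} x(E[e_i \succsim_{v_i}] \setminus \{e_{i+1}\}) \le k - 1 = 2\lfloor k/2 \rfloor$, which is exactly \eqref{eq_3:constraint_super} for that walk.

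First I would set up the forward direction precisely: given an odd directed cycle $W = (a_0,a_1,\dots,a_k)$ in $D$, write each $a_i = \langle v_{i-1}, v_i, v_{i+1}\rangle$ (consistency of consecutive triples forces this labelling, reading off a closed walk $(v_0,v_1,\dots,v_k)$ in $G$ with $v_0 = v_k$, $v_1 = v_{k+1}$, etc.), verify that this closed walk is dangerous from the arc conditions, check that its multiplicity is at most $4|E|$ so that it lies in $\mathcal{D}$, and then use the telescoping identity so that \eqref{eq_3:constraint_super} for this walk yields ${\sf cost}(W) \ge 1$. For the converse, I would start from a dangerous walk $C = (v_0,v_1,\dots,v_k) \in \mathcal{D}$ and build a closed directed walk in $D$ through the triples $\langle v_{i-1},v_i,v_{i+1}\rangle$; if $k$ is odd this is already an odd closed directed walk, and its cost being $\ge 1$ gives \eqref{eq_3:constraint_super} for $C$; if $k$ is even, I would argue that \eqref{eq_3:constraint_super} holds automatically or reduce to the odd case by a parity/decomposition argument — splitting the even constraint into a sum of odd-walk constraints, or handling it via the $\lfloor k/2 \rfloor = k/2$ slack directly.

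The main obstacle I anticipate is bridging the gap between \emph{directed cycles} (where all $a_i$ are distinct) and \emph{closed directed walks} (which may repeat vertices), together with the analogous gap on the $G$ side between walks in $\mathcal{D}$ and the cycles obtained from them. A dangerous closed walk need not correspond to a directed cycle in $D$, only to a closed directed walk; so I would need the standard fact that a closed directed walk with ${\sf cost} < 1$ decomposes into directed cycles, one of which is odd (since the total length is odd, parity forces an odd cycle among the pieces) and has cost $< 1$ as well — here the non-negativity of all arc costs from Lemma~\ref{lemma:non_negative_cost} is essential, since it guarantees that deleting a nonnegative-cost sub-cycle does not increase the remaining cost. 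Conversely, I must ensure that the closed directed walk I build from a dangerous walk $C$ genuinely lies in $D$, which requires the multiplicity bound ${\sf mult}(C) \le 4|E|$ in the definition of $\mathcal{D}$ to be compatible with the finite structure of $N$ (it is, since $N$ is finite, but the walk may still revisit vertices of $N$, so I should phrase the equivalence at the level of closed directed walks and then invoke the decomposition). Carefully matching these two decomposition arguments — on $D$ and on $G$ — is the delicate bookkeeping step; the algebraic telescoping is routine.
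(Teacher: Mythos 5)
Your proposal is correct and follows essentially the same route as the paper: the telescoping identity ${\sf cost}(W) = k - 2\sum_i x(E[e_i \succsim_{v_i}]\setminus\{e_{i+1}\})$, the multiplicity bound to place the extracted walk in $\mathcal{D}$, the observation that non-negativity of arc costs (Lemma~\ref{lemma:non_negative_cost}) disposes of even-length walks via $\lfloor k/2\rfloor = k/2$, and the cycle-extraction/parity argument to pass from an odd closed directed walk of cost $<1$ to an odd directed cycle of cost $<1$. The "delicate bookkeeping" you flag is exactly the step the paper carries out by iteratively splitting off sub-cycles at the first repeated vertex.
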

\begin{proof}
Assume that there exists 
an odd directed cycle $W$ in $D$ such that 
${\sf cost}(W) < 1$, 
and we prove that 
$x$ does not satisfy \eqref{eq_3:constraint_super}. 
Assume that $W = (a_0,a_1,\dots,a_k)$ and 
$a_i = \langle v_{i-1},v_i,v_{i+1} \rangle$
for each integer $i \in [k]$.
Notice that, 
since $(a_{i-1},a_i) \in K$
for every integer $i \in [k]$, 
$v_1,v_2,\dots,v_k$ are well-defined and 
we have $v_0 = v_k, v_{k+1} = v_1$. 
Define  
$C \coloneqq (v_0,v_1,\dots,v_k)$
and $e_i \coloneqq \{v_{i-1},v_i\}$ 
for each integer $i \in [k+2]$. 

\begin{claim} \label{claim_1:lemma:separation_equivalence}
${\sf mult}(C) \le 2|V| \le 4|E|$.
\end{claim}
\begin{proof}
Since 
the number of vertices $\langle v,w,s \rangle \in N$ such that 
$\{v,w\} = e$ is at most $2|V|$
for every edge $e \in E$, 
this claim follows from the fact that 
$W$ is a directed cycle in $D$. 
\end{proof}

Since 
$e_{i} \succsim_{v_i} e_{i-1}$ 
follows from 
$(a_{i-1},a_{i}) \in K$ 
for every 
integer $i \in [k]$, 
Claim~\ref{claim_1:lemma:separation_equivalence}
implies that 
$C \in \mathcal{D}$. 
Furthermore, 
\begin{equation*} 
\begin{split}
1 > {\sf cost}(W) 
& = \sum_{i \in [k]} \big(1 - x(E[e_{i} \succsim_{v_{i}}] \setminus \{e_{i+1}\}) 
- x(E[e_{i+1} \succsim_{v_{i+1}}] \setminus \{e_{i+2}\})\big)\\
& = k - 2 \sum_{i \in [k]} x(E[e_i \succsim_{v_i}] \setminus \{e_{i+1}\}). 
\end{split} 
\end{equation*}
Thus, since 
$k$ is odd, 
\begin{equation*} 
\sum_{i \in [k]} x(E[e_i \succsim_{v_i}] \setminus \{e_{i+1}\})
> (k-1)/2 = \lfloor k/2 \rfloor. 
\end{equation*}
This implies that 
$x$ does not satisfy \eqref{eq_3:constraint_super}. 

Assume that 
$x$ does not satisfy \eqref{eq_3:constraint_super}, i.e., 
there exists a closed walk $C = (v_0,v_1,\dots,v_k)\in \mathcal{D}$ such that 
\begin{equation} \label{eq_1:lemma:separation_equivalence}
\sum_{i \in [k]} x(E[e_i \succsim_{v_i}] \setminus \{e_{i+1}\})
> \lfloor k/2 \rfloor, 
\end{equation}
where we define 
$e_i \coloneqq \{v_{i-1},v_i\}$
for each integer $i \in [k+1]$. 
For each integer $i \in [k]$, 
we define $a_i \coloneqq \langle v_{i-1},v_i,v_{i+1}\rangle$. 
Define $a_0 \coloneqq a_k$
and $W \coloneqq (a_0,a_1,\dots,a_k)$.
Since 
$\{v_i,v_{i+1}\} \succsim_{v_i} \{v_{i-1},v_i\}$ 
for every integer $i \in [k]$, 
$W$ is a closed directed walk in $D$. 
Define $e_{k+2} \coloneqq \{v_1,v_2\}$.
Lemma~\ref{lemma:non_negative_cost}
implies that 
\begin{equation} \label{eq_2:lemma:separation_equivalence}
\begin{split}
0 & \le 
\sum_{i \in [k]}
{\sf cost}(\langle v_{i-1},v_i,v_{i+1}\rangle, \langle v_{i},v_{i+1},v_{i+2}\rangle)\\
& =
\sum_{i \in [k]}
\big(1 - x(E[e_i \succsim_{v_i}] \setminus \{e_{i+1}\})
- x(E[e_{i+1} \succsim_{v_{i+1}}] \setminus \{e_{i+2}\})\big)\\
& =
k 
- 2\sum_{i \in [k]}
x(E[e_i \succsim_{v_i}] \setminus \{e_{i+1}\}),
\end{split}
\end{equation}
If $k$ is even, then $k/2 = \lfloor k/2 \rfloor$. 
Thus, 
\eqref{eq_1:lemma:separation_equivalence} 
and \eqref{eq_2:lemma:separation_equivalence} imply that 
$k$ is odd. 
This implies that
$W$ is an odd closed directed walk in $D$. 
Furthermore, 
\begin{equation*} 
\begin{split}
{\sf cost}(W) 
&= k - 2 \sum_{i \in [k]} x(E[e_i \succsim_{v_i}] \setminus \{e_{i+1}\})\\
&< k - 2\lfloor k/2 \rfloor
= k - 2((k-1)/2) = 1.
\end{split} 
\end{equation*}
Thus, by using $W$ and Lemma~\ref{lemma:non_negative_cost}, 
we can prove that 
there exists 
an odd directed cycle $W^{\prime}$ in $D$ such that 
${\sf cost}(W^{\prime}) < 1$
as follows. 
Assume that $W$ is not a directed cycle in $D$. 
Let $i$ be the minimum integer 
in $[k]$ such that there exists an integer $j \in [i-1]$ 
satisfying the condition that 
$a_i = a_j$. 
The definition of a directed walk implies that 
$j \neq i-1$.
Then $W^{\circ} = (a_j,a_{j+1},\dots,a_i)$ is a directed cycle in 
$D$. 
Since ${\sf cost}(W) < 1$,
Lemma~\ref{lemma:non_negative_cost}
implies that ${\sf cost}(W^{\circ}) < 1$. 
If $i - j$ is odd, then the proof is done. 
Assume that $i - j$ is even. 
Define the closed directed walk $W^{\bullet}$ in $D$ by 
\begin{equation*}
W^{\bullet} \coloneqq (a_0,a_1,\dots,a_{j-1},a_i,a_{i+1},\dots,a_k). 
\end{equation*}
Then Lemma~\ref{lemma:non_negative_cost}
implies that ${\sf cost}(W^{\bullet}) < 1$.
Since $i - j$ is even and 
$W^{\bullet}$ is obtained from $W$ by removing $i - j$ vertices, 
$W^{\bullet}$ consists of an odd number of vertices.
Thus, by repeating this, we can obtain a desired odd directed cycle in $D$.
This completes the proof. 
\end{proof} 

Lemma~\ref{lemma:separation_equivalence}
implies that 
if we can find 
an odd directed cycle $W$ in $D$ that minimizes  
${\sf cost}(W)$ in polynomial time, then 
the separation problem for ${\bf P}$ 
can be solved in polynomial time. 
Since it is known that 
Lemma~\ref{lemma:non_negative_cost}
implies that
this problem can be solved in polynomial time
(see \cite[Problem~8.3.6]{GrotschelLS93}), 
the separation problem for ${\bf P}$ 
can be solved in polynomial time. 

Finally, we consider 
the problem of finding a 
super-stable matching in $G$ if one exists. 
First, we determine whether ${\bf P} \neq \emptyset$. 
If ${\bf P} = \emptyset$, then 
there does not exist a super-stable matching in $G$. 
Assume that ${\bf P} \neq \emptyset$. 
Then we compute an element $x \in {\bf P}$ in polynomial time~\cite{GrotschelLS93}. 
For each vertex $v \in V$, 
we can compute $d(v)$ in polynomial time by using $x$. 
Furthermore, 
we can compute $L,B$ in polynomial time 
by using $d(\cdot)$. 
We can compute a partition $P,Q$ by 
the breadth-first search in polynomial time. 
These imply that we can find a super-stable matching in $G$
in polynomial time. 

\bibliographystyle{plain}
\bibliography{super_roommates_lp_bib}

\end{document}